\newtheorem{theorem}{Theorem}
\newtheorem{remark}{Remark}
\newtheorem{notation}{Notation}
\newtheorem{definition}{Definition}
\newtheorem{lemma}{Lemma}
\newtheorem{corollary}{Corollary}
\newcommand{\slfrac}[2]{\left.#1\middle/#2\right.}
\newcommand{\fV}{\tilde{V}}
\newcommand{\fI}{\tilde{I}}
\newcommand{\fL}{\tilde{L}}
\newcommand{\defeq}{\vcentcolon=}
\begin{document}
\begin{flushleft}
{\Large
\textbf{A sparse reformulation of the Green's function formalism allows efficient simulations of morphological neuron models}
}
\\
Willem A.M. Wybo$^{1}$, 
Daniele Boccalini$^{2}$, 
Benjamin Torben-Nielsen$^{1,3}$,
Marc-Oliver Gewaltig$^{1}$
\\
\textbf{1} Blue Brain Project, Brain Mind Institute, EPFL, Geneva, Switzerland
\\
\textbf{2} Chair of Geometry, Mathematics Institute for Geometry and Applications, EPFL, Lausanne, Switzerland
\\
\textbf{3} Computational Neuroscience Unit, Okinawa Institute of Science and Technology, Okinawa, Japan
\\
$\ast$ E-mail: willem.wybo@epfl.ch
\end{flushleft}


\begin{abstract}
We prove that when a class of partial differential equations, generalized from the cable equation, is defined on tree graphs, and when the inputs are restricted to a spatially discrete, well chosen set of points, the Green's function (GF) formalism can be rewritten to scale as $O(n)$ with the number $n$ of input locations, contrary to the previously reported $O(n^2)$ scaling. We show that the linear scaling can be combined with an expansion of the remaining kernels as sums of exponentials, to allow efficient simulations of equations from the aforementioned class. We furthermore validate this simulation paradigm on models of nerve cells and explore its relation with more traditional finite difference approaches. Situations in which a gain in computational performance is expected, are discussed.
\end{abstract}

\smallskip
\textbf{Keywords.} partial differential equation, tree graph, Green's function, sparse, simulation

\section{Introduction}
Neurons have extensive morphological ramifications, called dendrites, that receive and integrate inputs from other neurons, and then transmit the result of this integration to the soma, or cell body, where an output in the form of action potentials is generated. Dendritic integration is considered a hallmark of neuronal computation \citep{London2005, Hausser2003, Silver2010} and parallels the dendritic morphology \citep{Torben-Nielsen2010, Agmon-Snir1998, Segev2000}. It is often studied using the cable equation --- a one dimensional reaction-diffusion equation that governs the evolution of the membrane potential $V(x,t)$, and is defined on a tree graph representing the dendritic arborization. Inputs, such as synaptic currents, are usually concentrated in a spatially discrete set of points and depend on this potential through their driving force \citep{Kuhn2004}, and possibly through other non-linear conductances such as NMDA \citep{Jahr1990}.

Traditionally, this integration is modeled using compartmental simulations, where space is discretized in a number of compartments of a certain length and a second order finite difference approximation is used to model the longitudinal currents \citep{Hines1984}. As the error of this approximation depends on the distance step, complex neural structures require many compartments and are computationally costly to simulate. 

Often however, stretches of neural fiber behave approximately linear \citep{Schoen2012}, and one is not interested in the explicit voltage at all locations, but only at a specific output location. For this reason we proposed the idea of using the Green's function formalism to simulate neuron models that receive inputs at a limited number of locations \citep{Wybo2013}. Given a number of $n$ inputs, the voltage at an output location $x_i$ can be written in this formalism as:
\begin{equation}\label{eq:GFsimple}
V(x_i, t) = \sum_{j=1}^{n} \int_{0}^{t} g(x_i, x_j, t-s) I_j(s) \mathrm{d}s.
\end{equation}
Problems arise when these inputs depend on the local voltage. Since, in such a case, all local voltages have to be known, a system of $n$ Volterra integral equations has to be integrated:
\begin{equation}\label{eq:GFhard}
V(x_i, t) = \sum_{j=1}^{n} \int_{0}^{t} g(x_i, x_j, t-s) I_j(s, V(x_j,s)) \mathrm{d}s, \hspace{4mm} i=1,\hdots,n.
\end{equation}
It can be seen that this system contains $n^2$ convolutions. This unfavorable scaling, along with the fact that the convolutions themselves are costly to compute and the restriction to point-source non-linearities, significantly impedes the computational efficiency of the GF formalism, and restricts its usefulness to very small numbers of input locations \citep[page 59-60]{Koch1998}.

In this work, we are able to significantly improve computational efficiency compared to the classic GF formalism by showing that all three perceived disadvantages can be overcome. Using a transitivity property for the Green's function \citep{Koch1998} (see appendix \S\ref{app:trans}), we show that, when the input locations are well chosen, a transformation of the system \eqref{eq:GFhard} exists so that only $O(n)$ kernels are required. We term this the \emph{sparse Green's function} (SGF) formalism. As an example of how an efficient integration algorithm can be designed for the resulting system of Volterra integral equations, we show that the kernels can be expressed as sums of exponentials using the vector fitting (VF) algorithm \citep{Gustavsen1999} (see appendix \S\ref{app:VF}) and that consequently the convolutions can be computed recursively. Finally, we show (in a simplified setting) that when the spacing between the input locations becomes sufficiently small, the SGF formalism reduces to the second order finite difference approximation (in the spatial component) of the original equation. As a consequence, the SGF formalism can be seen as a `generalization' of the second order finite difference approximation to arbitrary distance steps, as long as what lies in between the distance steps is approximately linear.

We validate this novel SGF formalism by reproducing two canonical results in neuroscientific modeling. First, we reproduce the result of \citep{Moore1978} on axonal action potential velocity. Second, we compare our SGF formalism with the de facto standard \textsc{neuron} simulator \citep{Carnevale2006} in the case of dendritic integration with conductance based synapses. In a final section, we discuss in which cases the SGF formalism may yield computational advantage over canonical second order finite difference approaches (of which the \textsc{neuron} simulator is an example).

\subsection*{The system of equations}
Each edge of the tree graph represents a segment of the dendritic tree, for which the cable equation has the following form:
\begin{equation}\label{eq:cable}
2 \pi a c_m \frac{\partial V}{\partial t}(x,t) +\frac{\pi a^2}{r_a} \frac{\partial^2 V}{\partial x^2}(x,t) - 2 \pi a g_m V(x,t) + \sum_c I_c(x,t) = \sum_{i=1}^n I_i(t,V(x_i,t)) \delta(x-x_i),
\end{equation}
where $c_m, g_m$ and $r_a$ denote, respectively, the membrane capacitance, the membrane conductance and the axial resistance, $a$ denotes the radius of the dendritic branch, $I_c$ the current contribution of a channel type $c$ and $I_i$ the input current at location $x_i$. The ion channel  current can depend non-linearly on the voltage and a number of state-variables:
\begin{equation}\label{eq:Ichan}
I_c(x,t) = f_c(V(x,t), \mathbf{y}_c(x,t)),
\end{equation}
where the state-variables $\mathbf{y}_c(x,t)$ evolve according to:
\begin{equation}\label{eq:chan}
\dot{y}_{c,j}(x,t) = \frac{y_{c, j,\text{inf}}(V(x,t)) - y_{c,j}(x,t)}{\tau_{c,j}(V(x,t))},
\end{equation}
with $\tau_{c,j}$ and $y_{c, j,\text{inf}}$ functions that depend on the channel type. Linearizing these currents yields a quasi-active description \citep{Koch1998} of the ion channels:
\begin{equation}\label{eq:Ichanqa}
I_{c,\text{lin}}(x,t) = \frac{\partial f_c}{\partial V}V(x,t) + \sum_j \frac{\partial f_c}{\partial y_{c,j}} y_{c,j}(x,t),
\end{equation}
with
\begin{equation}
\dot{y}_{c,j}(x,t) = \frac{d}{dV}\left(\frac{y_{c, j,\text{inf}}}{\tau_{c,j}}\right) V(x,t) - \frac{1}{\tau_{c,j}} y_{c,j}(x,t),
\end{equation}
where all derivatives, as well as $\tau_{c,j}$, are evaluated at the equilibrium values of the state variables. If there are a total of $K$ state-variables associated with ion channels, a system of $K+1$ PDE's of first degree in the temporal coordinate is obtained, which can be recast into a single PDE of degree $K+1$.

Consequently, we are interested in the GF of PDE's of the following form:
\begin{equation}\label{eq:PDE}
 \hat{L}(x) V(x,t) = \left[ \hat{L}_0(x) \frac{\partial^2}{\partial x^2} + \hat{L}_1(x) \frac{\partial}{\partial x} + \hat{L}_2(x) \right] V(x,t) = \hat{L}_3(x) \sum_{i=1}^n I_i(t) \delta(x-x_i),
\end{equation}
where $\hat{L}_i(x), i=0,\hdots,3$ are  operators of the form $\hat{L}_i(x) = \sum_{k = 0}^{K+1} C_k \frac{\partial^k}{\partial t^k}$, $K \in \mathbb{N}$, and $\delta$ is the Dirac-delta function.  
We assume:
\begin{enumerate}
\item[\textit{(i)}] that an equation of the form \eqref{eq:PDE} is defined on each edge of the tree graph (let $E$ denote the set of edges). 
\item[\textit{(ii)}] that on each leaf (let $\Lambda$ denote the set of leafs) a boundary condition of the following form holds:
\begin{equation} \label{eq:leafcond}
\hat{L}_{1}^{\lambda} \frac{\partial}{\partial x} V^{\epsilon_{\lambda}}(t) + \hat{L}_{2}^{\lambda} V^{\epsilon_{\lambda}}(t) = \hat{L}_{3}^{\lambda} I^{\lambda}(t) \hspace{4mm} \forall \lambda \in \Lambda,
\end{equation}
where $\hat{L}_{i}^{\lambda}$ are operators defined analogously to $\hat{L}_i(x)$ and where $V^{\epsilon_{\lambda}}$ is the field value on the adjacent edge $\epsilon_{\lambda}$ in the limit of $x^{\epsilon_{\lambda}}$ approaching the leaf. Note that in this general form, equation \eqref{eq:leafcond} can represent sealed end ($\hat{L}_{1}^{\lambda}=1, \hat{L}_{2}^{\lambda}=\hat{L}_{3}^{\lambda}=0$) or voltage clamp ($\hat{L}_{1}^{\lambda}=0, \hat{L}_{2}^{\lambda}=\hat{L}_{3}^{\lambda}=1$ and $I(t)$ constant) boundary conditions or, when there is only one neurite leaving the soma, a lumped soma boundary condition (see \citep{Tuckwell1988Introduction}, and where the operators can possibly contain higher order derivatives if quasi-active channels are present). 
\item[\textit{(iii)}] that at each node that is not a leaf (let $\Phi$ denote the set of nodes that are not leafs, and let $E(\phi)$ denote the set of edges that join at node $\phi \in \Phi$):
\begin{equation} \label{eq:nodecond}
\begin{aligned}
& V^{\epsilon}(t) = V^{\epsilon'}(t), \hspace{4mm} \forall \epsilon, \epsilon' \in E(\phi), \, \forall \phi \in \Phi \\
& \sum_{\epsilon \in E(\phi) }\hat{L}^{\epsilon} \frac{\partial}{\partial x} V^{\epsilon}(t) = \hat{L}_1^{\phi} V^{\epsilon}(t) 
 + \hat{L}_2^{\phi} I^{\phi}(t), \hspace{4mm} \forall \phi \in \Phi,
\end{aligned}
\end{equation}
where the operators $\hat{L}^{\epsilon, \phi}$ are again defined as above and $V^{\epsilon}$ denotes the voltage on edge $\epsilon$ in the limit of $x^{\epsilon}$ approaching the node $\phi$.  The first condition then expresses the continuity of the voltage at a node, whereas the second condition can signify the conservation of current flow ($\hat{L}^{\epsilon}=\slfrac{\pi a^2_{\epsilon}}{r_a^{\epsilon}}$,  $\hat{L}_1^{\phi}=\hat{L}_2^{\phi}=0$) or a somatic boundary condition when multiple neurites join at the soma (see again \citep{Tuckwell1988Introduction}).
\end{enumerate}
Algorithms to compute the GF of this system of PDE's have been described extensively in the neuroscientific literature: the algorithm by \citep{Koch1985} computes the Green's function exactly in the Fourier domain whereas the `sum-over-trips' approach pioneered by \citep{Abbott1991} (see \citep{Bressloff1997} for another overview) and extended by \citep{Coombes2007, Caudron2012} uses a path integral formalism. We implemented the algorithm given in \citep{Koch1985} as we are interested in the GF in the Fourier domain.


\section{Methods}
\subsection*{A sparse reformulation of the Green's function formalism}
In this section we prove formally that a sparse reformulation of equation \eqref{eq:GFhard} exists.
Fourier transforming this equation gives the GF formalism in the Fourier domain \citep{Butz1974, Wybo2013}, :
\begin{eqnarray}\label{eq:fourierGF}
\left( \begin{array}{c}
\fV_1(\omega) \\ \vdots \\ \fV_n(\omega)
\end{array} \right) =
\left( \begin{array}{ccc}
g_{11}(\omega) & \cdots & g_{1n}(\omega) \\
\vdots & \ddots & \vdots \\
g_{n1}(\omega) & \cdots & g_{nn}(\omega) \\
\end{array} \right)
\left( \begin{array}{c}
\fI_1(\omega) \\ \vdots \\ \fI_n(\omega)
\end{array} \right),
\end{eqnarray}
where $\fV_1, \hdots, \fV_n$ denote the field values resulting from the inputs $\fI_1,\hdots, \fI_n$ at locations $1,\hdots,n$ (which may be arbitrarily distributed along the edges of the tree graph), and $g_{ij}$ denotes the GF evaluated between points $i$ and $j$ (see \citep{Koch1985} for an example of an algorithm to evaluate these functions). Note that for the GF, we dropped the $\sim$ that signifies the Fourier transform for notational simplicity. Whenever the argument of a kernel is not explicitly mentioned, it will be implied that it is the Fourier transform that is under consideration. Note furthermore that the GF between points $i$ and $j$ is equal to the GF between points $j$ and $i$, so that the matrix in equation \eqref{eq:fourierGF} is symmetric \citep[page 63]{Koch1998}. Note finally that while the input current $\fI_i$ may depend on the local voltage (cf. equation \eqref{eq:cable}), it is still possible to take the Fourier transform by considering $I_i(t, V(x_i,t))$ as an a priori unknown function of time ($\equiv I_i(t)$).

We can rewrite the set of equations \eqref{eq:fourierGF} so that the field at one location depends on the input only at that location and the field at all other locations:
\begin{equation}\label{eq:rewrite}
\fV_i(\omega) = f_i(\omega)\fI_i(\omega) + \sum_{j\neq i} h_{ij}(\omega) \fV_j(\omega),
\end{equation}
where (with $G$ denoting the matrix for which $G_{ij} = g_{ij}$ and $G^{-1}$ its matrix inverse and omitting the argument $\omega$ for notational clarity)
\begin{eqnarray}\label{eq:inverse}
f_i & = & 1 / (G^{-1})_{ii} \nonumber \\
h_{ij} & = & - (G^{-1})_{ij} / (G^{-1})_{ii}.
\end{eqnarray}

Intuitively, the field at any location can only depend on the local input and the fields at the neighboring locations. After introducing the necessary definitions and notations, we will prove this intuition formally.

\begin{notation}\label{not:colrowdel}
Let $A$ denote a matrix. $A[j;i]$ then denotes the same matrix with the $j$'th row and $i$'th column deleted.
\end{notation}

This way, $(-1)^{j+i}\det(A[j;i])$ gives the $(j,i)$'th minor of $A$ and the elements of the adjugate matrix of $A$ can be written as:
\begin{equation}
\text{adj}(A)_{ij} =  (-1)^{j+i}\det(A[j;i])
\end{equation}
Then, by using Cramer's rule (see for instance \citep[pages 17-24]{Horn2012}):
\begin{equation}
A^{-1} = \frac{\text{adj}(A)}{\det(A)},
\end{equation}
equation \eqref{eq:inverse} can be written by as:
\begin{eqnarray}\label{eq:inverse2}
f_i & = & \det(G) / \det(G[i;i]) \nonumber \\
h_{ij} & = & (-1)^{i+j+1} \det(G[j;i]) / \det(G[i;i]).
\end{eqnarray}

We specify the discrete set of input locations as follows:

\begin{notation}
$\mathcal{P}$ denotes a set of $n$ points distributed on a tree graph.
\end{notation}

The geometry of the tree graph will induce a nearest neighbor relation on $\mathcal{P}$. We define this relation in the following way:

\begin{definition}\label{def:nn}
Two points $i,j \in \mathcal{P}$ are \emph{nearest neighbors} if no other point lies on the shortest path between them.
\end{definition}

This allows for the definition of sets of nearest neighbors:

\begin{definition}\label{def:setnn}
A \emph{set of nearest neighbors} $\mathcal{N}$ is a subset of $\mathcal{P}$ in which each pair of points is a pair of nearest neighbors, and for which no other point can be found in $\mathcal{P}$ that is not in $\mathcal{N}$ but is still a nearest neighbor of every point in $\mathcal{N}$.
\end{definition}

In Fig~\ref{fig:formalism}A we show an illustration of these sets.

\begin{figure}
   \centering
   \includegraphics[width=0.65\textwidth]{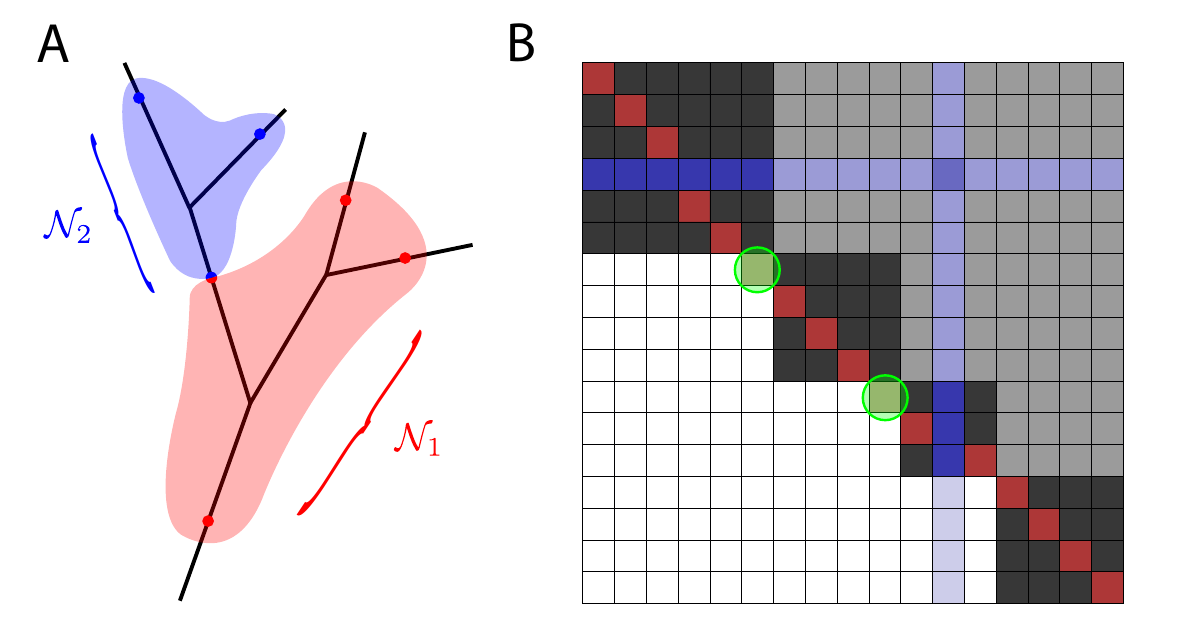}
   \caption{Schematic of the rationale behind the sparsification of the Green's function formalism for tree structures. A: Illustration of the sets of nearest neighbors $\mathcal{N}_q$ ($q=1,2$) induced by the tree structure. For any pair of points in $\mathcal{N}_{q}$ there is no other point on the shortest path between them. B: Illustration of how the structure of the matrices $A$ and $A[j,i]$ (here $j<i$) gives rise to the sparseness of the SGF. These matrices can be written in `block upper triangular form' (white: zeros, black: blocks on the diagonal, grey: other non-zero elements). The diagonal of $A[j,i]$ (red) is shifted between the $j$'th row and $i$'th column (blue) compared to the diagonal of $A$. When $j$ and $i$ do not belong to the same set of nearest neighbors, there is always at least one zero on the diagonal of the resulting matrix (here, there are two zeros --- green circles).}
   \label{fig:formalism}
\end{figure}

\begin{notation}\label{not:numsetnn}
We use $m$ to denote the number of sets of nearest neighbors which can be found in a given $\mathcal{P}$.
\end{notation}

\begin{definition}\label{def:matrestr}
For any set of points $\mathcal{P} = \{1,\hdots,n\}$, $G(\mathcal{P})$ is the matrix of transfer kernels:
\begin{eqnarray}
G(\mathcal{P}) =
\left( \begin{array}{cccc}
g_{11} & g_{12} & \cdots & g_{1n} \\
\vdots & \vdots & \ddots & \vdots \\
g_{n1} & g_{n2} & \cdots & g_{nn} \\
\end{array} \right),
\hspace{4mm} 1,\hdots,n \in \mathcal{P}
\end{eqnarray}
\end{definition}
However, when $\mathcal{P}$ denotes a set of input locations distributed on the tree graph, we will only make the argument of $G$ explicit when we consider a subset of $\mathcal{P}$. Hence it is understood that $G \equiv G(\mathcal{P})$, as is the case in formulas \eqref{eq:inverse} and \eqref{eq:inverse2}.

\begin{definition} \label{def:att}
An \emph{attenuation function} $a_{ij}$ is defined by:
\begin{equation}
a_{ij} = g_{ij} / g_{jj}.
\end{equation}
\end{definition}
Note that trivially $a_{ii} = 1$.

\begin{definition}\label{def:attmatrestr}
$A(\mathcal{P})$ is the matrix of attenuation functions $a_{ij}$ between any two points $i,j \in \mathcal{P}$ (analogous to definition \ref{def:matrestr}).
\end{definition}

The significance of these sets $\mathcal{N}$ (definition~\ref{def:setnn}) is due to the the following lemma \citep[page 63]{Koch1998}:

\begin{lemma}\label{thm:transitivity}
\textbf{Transitivity property.} If $i,j \in \mathcal{P}$ are not nearest neighbors, then for every point $l$ on the direct path between them it holds that:

\begin{equation}\label{eq:trans}
g_{ij} =  \frac{g_{il}g_{lj}}{g_{ll}},
\end{equation}
\end{lemma}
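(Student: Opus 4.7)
The plan is to exploit the tree structure to decompose the boundary-value problem that defines the Green's function. Since $i$ and $j$ are not nearest neighbors, a point $l$ on the direct path between them splits the tree into two connected subtrees $T_i$ and $T_j$, sharing only the node $l$, with $i \in T_i$ and $j \in T_j$. I would first argue that on $T_i$ the Fourier-transformed PDE \eqref{eq:PDE} is \emph{source-free} (the unit impulse under consideration sits at $j \in T_j$, and no input is placed between $l$ and $i$ by the definition of being on the direct path), and that the boundary conditions \eqref{eq:leafcond} and \eqref{eq:nodecond} restricted to $T_i$ are homogeneous at every leaf and interior node of $T_i$, with the single exception of the distinguished node $l$.

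Viewed as a problem on $T_i$ alone, this is therefore a linear homogeneous boundary-value problem with a Dirichlet datum $\fV(l)$ prescribed at $l$. By linearity and uniqueness of the restricted problem (in the Fourier domain the operator of \eqref{eq:PDE} becomes an ODE system in $x$ and standard uniqueness arguments apply), its solution is linear in that datum, so there exists a function $\alpha$, depending only on $T_i$ and the frequency, such that
\begin{equation}
\fV(x) = \alpha(x)\, \fV(l), \qquad x \in T_i.
\end{equation}
Crucially, $\alpha(i)$ is intrinsic to the subtree $T_i$ and is independent of how the value $\fV(l)$ was produced.

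To identify $\alpha(i)$ I would apply this relation in two scenarios. With the unit impulse at $j$, the definition of the Green's function gives $\fV(l)=g_{lj}$ and $\fV(i)=g_{ij}$, so $g_{ij} = \alpha(i)\, g_{lj}$. With the unit impulse placed instead at $l$ itself, the Dirac source can be absorbed into the inhomogeneous part of the node condition \eqref{eq:nodecond} at $\phi=l$, so the problem restricted to $T_i$ is again the same homogeneous boundary-value problem but now with $\fV(l)=g_{ll}$; hence $g_{il} = \alpha(i)\, g_{ll}$. Dividing the two relations eliminates $\alpha(i)$ and gives
\begin{equation}
g_{ij} = \frac{g_{il}\,g_{lj}}{g_{ll}},
\end{equation}
as claimed.

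The main obstacle I anticipate is this second identification step: one must verify carefully that a Dirac source placed \emph{at} the interior node $l$ is equivalent, on the $T_i$ side, to a homogeneous problem with Dirichlet datum $g_{ll}$. This rests on bookkeeping of the node condition \eqref{eq:nodecond}: the delta contributes only to the current-balance equation at $l$ and not to the PDE on either open subtree, so the restricted problems on $T_i$ in the two scenarios are genuinely identical apart from the value of $\fV(l)$. Everything else — existence and uniqueness for the linear ODE system on $T_i$ with mixed homogeneous/Dirichlet boundary conditions, and the resulting linearity of the map $\fV(l)\mapsto \fV(i)$ — is standard, and the degenerate cases $l=i$ or $l=j$ reduce the stated identity to a tautology.
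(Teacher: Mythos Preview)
Your argument is correct, and it rests on the same core observation as the paper's proof---that on the subtree $T_i$ not containing the source, the solution of the homogeneous problem is determined up to a scalar by its value at the cut point $l$---but the packaging is different. The paper proceeds constructively: it first writes the Green's function on a single edge via the textbook formula \eqref{eq:gf}, then shows by a leaf-to-root recursion that every edge of the tree acquires effective boundary conditions of the form \eqref{eq:boundcond}, and finally verifies \eqref{eq:trans} by explicit substitution in four cases according to which edges contain $x_1,x_2,x_3$. You bypass the explicit construction by invoking linearity and uniqueness of the restricted boundary-value problem to get the proportionality $\fV(i)=\alpha(i)\,\fV(l)$ in one stroke, and then identify $\alpha(i)$ by comparing two source placements. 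Your route is shorter and more conceptual; the paper's buys explicit edge-by-edge propagation formulas (your $\alpha$ is realized concretely as $u^{\epsilon}(x)/u^{\epsilon}(0)$ on each edge), which the paper makes use of elsewhere. Your caveat about the delta-at-$l$ scenario is well placed but harmless: the Dirac source contributes only to the jump/current-balance relation at $l$ and not to the PDE on the open subtree, so the restricted problem on $T_i$ is indeed identical in the two scenarios apart from the Dirichlet datum, exactly as you say.
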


\begin{proof}
See appendix \S\ref{app:trans}.
\end{proof}

This leads directly to a similar transitivity property for attenuations:

\begin{equation}\label{eq:transat}
a_{ij} = a_{il}a_{lj}.
\end{equation}

Note that we may obtain the matrix $A$ (definition~\ref{def:attmatrestr}) from $G$ by dividing, for all $i=1,\hdots,n$, its $i$'th column by $g_{ii}$, and that as a consequence $\left(\prod_{i} g_{ii} \right) \det(A) = \det(G)$. Hence \eqref{eq:inverse2} can be written as:
\begin{eqnarray}\label{eq:inverse3}
f_i & = & g_{ii} \det(A) / \det(A[i;i]) \nonumber \\
h_{ij} & = & (-1)^{i+j+1} \det(A[j;i]) / \det(A[i;i]).
\end{eqnarray}

Given these definitions and notation, we can capture the aforementioned intuition formally as follows:

\begin{theorem}\label{thm:maintheorem}
Consider an arbitrary set of points $\mathcal{P}$ on a tree graph. Then
\begin{itemize}
\item[i)] for a point $i$ that is in exactly $p$ sets $\mathcal{N}_{1}, \hdots, \mathcal{N}_{p}$,
\begin{equation}\label{eq:Kin}
f_i = g_{ii}\frac{\det(A(\mathcal{N}_{1})) \cdots \det(A(\mathcal{N}_{p}))}{\det(A(\mathcal{N}_{1} \cup \hdots \cup \mathcal{N}_{p})[i;i])}
\end{equation}
\item[ii)] 
for a pair of points $i,j$ that are nearest neighbors, and for which there consequently exists a set $\mathcal{N} \ni i,j$, and where there are exactly $p$ other sets  $\mathcal{N}_{1}, \hdots, \mathcal{N}_{p}$ that contain $i$:
\begin{equation}\label{eq:Ktrans}
h_{ij} = (-1)^{i+j+1}\frac{\det(A(\mathcal{N}_{1})) \hdots \det(A(\mathcal{N}_{p})) \det(A(\mathcal{N})[j;i])}{\det(A(\mathcal{N}_{1} \cup \hdots \cup \mathcal{N}_{p} \cup \mathcal{N})[i;i])}
\end{equation}
\item[iii)] for a pair of points $i,j$ that are not nearest neighbors,
\begin{equation}\label{eq:stuff}
h_{ij} = 0 \hspace{3mm} \text{ and } \hspace{3mm} h_{ji} = 0.
\end{equation}
\end{itemize}
\end{theorem}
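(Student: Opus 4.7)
The plan is to prove (iii) first by a rank argument exploiting the transitivity property, and then to deduce (i) and (ii) from a block decomposition of $A$ at the point $i$ combined with the matrix determinant lemma.

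For (iii), pick any $l \in \mathcal{P}$ on the direct path between $i$ and $j$ (which exists because $i$ and $j$ are not nearest neighbors), remove $l$ from the tree graph, and let $\mathcal{P}_j$ denote the set of points of $\mathcal{P}$ lying in the resulting component containing $j$. The key observation is that for every row index $k \in \mathcal{P} \setminus (\mathcal{P}_j \cup \{l\})$ and column index $q \in \mathcal{P}_j$, the transitivity property \eqref{eq:transat} gives $a_{kq} = a_{kl} a_{lq}$. Replacing each column $C_q$ of $A[j;i]$ with $q \in \mathcal{P}_j$ by $C_q - a_{lq} C_l$ then leaves these columns supported only on the $|\mathcal{P}_j| - 1$ rows indexed by $\mathcal{P}_j \setminus \{j\}$, so that $|\mathcal{P}_j|$ modified columns live in a subspace of dimension at most $|\mathcal{P}_j| - 1$. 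This forces a linear dependence, hence $\det(A[j;i]) = 0$ and $h_{ij} = 0$; swapping $i$ and $j$ gives $h_{ji} = 0$.

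For (i) and (ii), I would partition $\mathcal{P} \setminus \{i\} = \mathcal{P}^{(1)} \cup \cdots \cup \mathcal{P}^{(p)}$ according to the components of the tree graph minus $i$ that contain points of $\mathcal{P}$, so that the nearest-neighbor sets $\mathcal{N}_s \ni i$ are precisely $\{i\} \cup \mathcal{N}'_s$, with $\mathcal{N}'_s \subseteq \mathcal{P}^{(s)}$ the nearest neighbors of $i$ in direction $s$. Transitivity forces each off-diagonal block $A^{(s,t)}$ with $s \neq t$ to be the rank-one outer product $u^{(s)} v^{(t)T}$, where $u^{(s)} = (a_{pi})_{p \in \mathcal{P}^{(s)}}$ and $v^{(s)T} = (a_{ip})_{p \in \mathcal{P}^{(s)}}$. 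The row operations $R^{(s)} \to R^{(s)} - u^{(s)} R^{(i)}$ turn $A$ block-triangular with diagonal blocks $\tilde M_s = A^{(s,s)} - u^{(s)} v^{(s)T}$, giving $\det(A) = \prod_s \det(\tilde M_s)$, while the matrix determinant lemma applied to $A[i;i] = D + \tilde u \tilde v^T$ (with $D$ the block-diagonal matrix of the $\tilde M_s$) gives $\det(A[i;i]) = \det(A) \bigl( 1 + \sum_s \beta_s/(1 - \beta_s) \bigr)$, where $\beta_s = v^{(s)T}(A^{(s,s)})^{-1} u^{(s)}$. Running the same bookkeeping on the reduced point set $\bigcup_q \mathcal{N}_q$ produces $\det(A(\bigcup_q \mathcal{N}_q)) = \prod_q \det(A(\mathcal{N}_q))$ together with an analogous formula for $\det(A(\bigcup_q \mathcal{N}_q)[i;i])$; part (i) then follows provided that $\beta_s$ evaluated on $\mathcal{P}^{(s)}$ agrees with the analogous $\hat\beta_s$ evaluated on $\mathcal{N}'_s$ alone. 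Part (ii) is the same calculation but tracking $\det(A[j;i])$ instead of $\det(A[i;i])$, which picks up the extra factor $\det(A(\mathcal{N})[j;i])$ and, thanks to (iii), kills all the cross-direction contributions.

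The main obstacle I foresee is the reduction lemma $\beta_s = \hat\beta_s$, asserting that the input-admittance-like quantity $v^{(s)T}(A^{(s,s)})^{-1} u^{(s)}$ depends only on the first-layer boundary $\mathcal{N}'_s$ and not on the rest of $\mathcal{P}^{(s)}$. I would prove it by iteratively peeling off a leaf $\ell$ of the nearest-neighbor subgraph restricted to $\mathcal{P}^{(s)} \setminus \mathcal{N}'_s$: applying the same block-triangularization argument inside $\mathcal{P}^{(s)}$ at the unique nearest neighbor of $\ell$, and taking a Schur complement, shows via one more use of transitivity that the $\ell$-contribution to $\beta_s$ cancels. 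Iterating until $\mathcal{P}^{(s)} = \mathcal{N}'_s$ then establishes the lemma and completes the proof of (i) and (ii).
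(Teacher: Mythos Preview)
Your argument for (iii) is correct and in fact cleaner than the paper's. The paper proves (iii) by first choosing a global root-based ordering of $\mathcal{P}$, then performing recursive row operations to put $A[j;i]$ in block-upper-triangular form and observing that the shifted diagonal necessarily hits a zero block. Your single column operation at the separator $l$ and the ensuing rank count achieves the same thing without any global ordering.

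For (i) and (ii) the two routes genuinely diverge. The paper never introduces the quantities $\beta_s$ or the matrix determinant lemma; instead it proves the global factorization
\[
\det(A)=\prod_{q=1}^{m}\det(A(\mathcal{N}_q))
\]
by the same recursive row-reduction scheme (peeling off the \emph{last} nearest-neighbor set in the chosen ordering, not a leaf of a subgraph), and then reads off (i) and (ii) directly by noting that deleting row and column $i$ simply merges $\mathcal{N}_1,\dots,\mathcal{N}_p$ into a single new nearest-neighbor set. This gives the full product formula for $\det(A)$ as a by-product, which your approach does not. Conversely, your block decomposition at $i$ together with the rank-one off-diagonal structure is more conceptual and avoids the somewhat delicate case analysis the paper needs for (ii) (where either $i$ or $j$ is closest to the root in $\mathcal{N}$ and one must switch between row and column operations). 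The price you pay is the reduction lemma $\beta_s=\hat\beta_s$. This lemma is true: one slick way to see it is that $1-\beta_s=\det(A(\{i\}\cup\mathcal{P}^{(s)}))/\det(A(\mathcal{P}^{(s)}))$, and applying the paper's factorization to both numerator and denominator shows that all factors cancel except $\det(A(\mathcal{N}_s))/\det(A(\mathcal{N}'_s))=1-\hat\beta_s$ (the only subtlety, that $\mathcal{N}'_s$ may fail to be a maximal nearest-neighbor set inside $\mathcal{P}^{(s)}$, can only occur when $|\mathcal{N}'_s|=1$, in which case $\det(A(\mathcal{N}'_s))=1$). Your proposed leaf-peeling via Schur complements should also work, but it is effectively re-proving a local instance of the paper's factorization.

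Your sketch of (ii) is the thinnest part: ``tracking $\det(A[j;i])$'' and invoking (iii) to kill cross-direction terms is plausible, but you should expect a computation comparable in length to your treatment of (i), since the deleted row and column now lie in different blocks of the decomposition at $i$.
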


\begin{remark} \normalfont
To unclutter the notation, we use the indices $i,j \in \mathcal{N}$ to refer to the actual points in $\mathcal{P}$, their corresponding positions in the full matrices $A$ and in the restricted matrices $A(\mathcal{N})$. This is justified, as it does not influence formulas \eqref{eq:Kin} and \eqref{eq:Ktrans}. Permuting the numbering of a pair of points amounts to permuting the corresponding rows and columns of the matrices, and thus does not change the determinants. Restricting $A$ to $A(\mathcal{N})$ amounts to deleting the appropriate rows and corresponding columns from $A$, and thus does not change the factor $(-1)^{i+j+1}$.
\end{remark}

\begin{proof}
For convenience, we introduce the following ordering scheme for the points: we choose one point as the root of the tree and give it the lowest index, and then reorder the other points such that within each set $\mathcal{N}$, the point closest to the root has the lowest index, and the other points in $\mathcal{N}$ are numbered consecutively. We then reorder the sets $\mathcal{N}$ so that the set with highest index contains the point with highest index, and so on.

Let there be a total of $m$ sets of nearest neighbors $\mathcal{N}_q$, $q=1,\hdots,m$ in $\mathcal{P}$. First, we show that
\begin{equation}\label{eq:determ2}
\det(A) = \prod_{q=1}^m \det(A(\mathcal{N}_q))
\end{equation}
by applying elementary row operations recursively set after set. Let's start with $\mathcal{N}_m$, the last set of nearest neighbors, containing $k$ elements. Our numbering scheme guarantees that the last $k-1$ rows of $A$ correspond to the $k-1$ elements in $\mathcal{N}_m$ that are not closest to the root. Let us denote the point in $\mathcal{N}_m$ that is closest to the root by $l$. Between every $d \notin \mathcal{N}_m$ and every $e \in \mathcal{N}_m$ the transitivity property \eqref{eq:transat} holds, and thus $a_{e d} = a_{e l}a_{l d}$.

By applying row operations $R_e(A) \rightarrow R_e(A) - a_{el} R_l(A)$, that do not change $\det(A)$ \citep[pages 9-10]{Horn2012}, and using \eqref{eq:transat} along with the trivial identity $a_{el} = a_{el}a_{ll}$, the last $k - 1$ rows of $A$ become:
\begin{align}\label{eq:row}
& R_j(A) = \nonumber \\
& \left(
\begin{array}{cccccc}
0 & \hdots & 0 & a_{e,n-k+2} - a_{el}a_{l,n-k+2} & \hdots & a_{en} - a_{el}a_{ln}
\end{array} \right).
\end{align}
Thus, the determinant reduces to
\begin{equation}\label{eq:determinants}
\det(A) = \det \left(
\begin{array}{cc}
A' & A'' \\
0 & B
\end{array} \right)
 = \det(A') \det(B),
\end{equation}
where $A'$ and $A''$ are the parts of $A$ unchanged by the row operations, and $B$ consists of the non-zero elements of the part of $A$ affected by the row operations. It holds that
\begin{equation}
\det(B) = \det(A(\mathcal{N}_m)).
\end{equation}
Construct $A(\mathcal{N}_m)$ by taking the last $k-1$ rows and columns from $A$, and the attenuations to (from) point $l$ as the first row (column) of $A(\mathcal{N}_m)$. Then, by applying the row-operations $R_e(A(\mathcal{N}_m)) \rightarrow R_e(A(\mathcal{N}_m)) - a_{e1} R_1(A(\mathcal{N}_m))$, the matrix $B$ will be found as the only non-zero minor of the first row elements --- the determinant of which is multiplied by 1 to give the determinant of $A(\mathcal{N}_m)$. Thus we have factorized $\det(A)$ in $\det(A')\det(A(\mathcal{N}_m))$.

Our numbering scheme for the points guarantees that a similar reduction can be applied to $\det(A')$, as long as it contains distinct sets $\mathcal{N}$, which by induction on $m$ proves \eqref{eq:determ2}. Note that we could have achieved the same reduction by applying the column operations $C_e(A) \rightarrow C_e(A) - a_{le} C_l(A)$ in an analogous manner. Then the matrix in \eqref{eq:determinants} would have its zero part in the upper right corner.

Next we compute the determinant of $A[i;i]$. By removing both the $i$'th row and column, it is as if $i$ is removed from the set $\mathcal{P}$ completely. Consequently, $\mathcal{N}_{1} \cup \hdots \cup \mathcal{N}_{p} \setminus \{i\}$ forms a new set of nearest neighbors, leading to a factor $\det(A(\mathcal{N}_{1} \cup \hdots \cup \mathcal{N}_{p} \setminus \{i\})) = \det(A(\mathcal{N}_{1} \cup \hdots \cup \mathcal{N}_{p})[i;i])$ instead of $\det(A(\mathcal{N}_{1})) \cdots \det(A(\mathcal{N}_{p}))$ in the final product \eqref{eq:determ2}. This proves expression \eqref{eq:Kin}.

Now consider a pair of points $i,j$ as in point \emph{ii)} of the theorem. If either $i$ or $j$ is the point closest in $\mathcal{N}$ to the root, we reorder the points in $\mathcal{P}$ as described above, but now with the extra constraint that within the set $\mathcal{N}$, all points are numbered consecutively. Note that it is always possible to find such an ordering for any one set $\mathcal{N}$ (but not for all sets at the same time). Assume furthermore that in the ordering of sets, $\mathcal{N}$ comes at the $k$'th place ($k \leq m$). Again, we start by factorizing out the determinant of $A(\mathcal{N}_m)$. If $k<m$ and if $l$, the point in $\mathcal{N}_m$ closest to the root, is not $j$, the factorization can be carried out as explained above by using the row operations $R_e(A) \rightarrow R_e(A) - a_{el} R_l(A)$ for $e \in \mathcal{N}_m$. If $l=j$, the deletion of the corresponding row prevents us from using it to execute these row operations, but the $j$'th column can be used instead: $C_e(A) \rightarrow C_e(A) - a_{le} C_l(A)$. Induction on $m$, until the set $\mathcal{N}$ is reached, then factors out the determinants of the matrices $A(\mathcal{N}_q)$, $q>k$. If $k=m$, the previous induction can be skipped. 

When $i$ nor $j$ are the point closest to the root in set $\mathcal{N}$, further factorization can proceed unhindered by using row operation, leading to a factorization similar to \eqref{eq:determ2}, except for the replacement $\det(A(\mathcal{N}_k)) \rightarrow \det(A(\mathcal{N}_k)[j;i])$.

When $i$ (resp. $j$) is the point closest to the root, our special numbering scheme allows the application of $R_d \rightarrow R_d - d_{di}R_i$ (resp. $C_d \rightarrow C_d - a_{id}C_i$) for $d \in \mathcal{N}_1 \cup \hdots \cup \mathcal{N}_{k-1}$, resulting in the matrix:
\begin{equation}
\det(A(\mathcal{N}_1 \cup \hdots \cup \mathcal{N}_{k})) = \det \left(
\begin{array}{cc}
B & 0 \\
A'' & A(\mathcal{N})[j;i]
\end{array} \right)
 = \det(B) \det(A(\mathcal{N})[j;i]),
\end{equation}
resp.
\begin{equation}
\det(A(\mathcal{N}_1 \cup \hdots \cup \mathcal{N}_{k})) = \det \left(
\begin{array}{cc}
B & A'' \\
0 & A(\mathcal{N})[j;i]
\end{array} \right)
 = \det(B) \det(A(\mathcal{N})[j;i]).
 \end{equation}
It can be checked that in both cases $\det(B) = \det(A(\mathcal{N}_1 \cup \hdots \cup \mathcal{N}_{k-1}))$. Further factorization can then proceed unhindered, leading again to \eqref{eq:determ2} with $\det(A(\mathcal{N}_k)) \rightarrow \det(A(\mathcal{N}_k)[j;i])$. This proves expression \eqref{eq:Ktrans}.

Finally, consider a pair of points $i>j$ that are not nearest neighbors. We assure that the points are numbered in such a way that all the sets $\mathcal{N}$ that contain $j$ have points with indices smaller than $i$ (note that since $i$ and $j$ are not nearest neighbors, none of these sets contains $i$). The familiar reduction scheme for $A[j;i]$ by applying row operations can be thought of as writing this matrix in `block-upper triangular' form. Its determinant is then the product of all determinants of the block-matrices on the diagonal. Between the $j$'th row and $i$'th column of this matrix, the diagonal is shifted by 1 to a lower row compared to the diagonal of $A$. Since $i$ does certainly not belong to the same `block' as $j$, there is at least one 0 on this diagonal (Fig~\ref{fig:formalism}B). This proves that $h_{ij}=0$. That $h_{ji}=0$, can be proved analogously, but now the matrix is written in `block-lower triangular' form by using column operations. As such, the statement \eqref{eq:stuff} is proven.
\end{proof}

\begin{corollary}\label{col:nok}
\textbf{Number of kernels.}
When a total of $m$ sets of nearest neighbors $\mathcal{N}_{1}, \hdots, \mathcal{N}_{m}$ exists within a set of $n$ points $\mathcal{P}$, the number of non-zero kernels $M$ in equation \eqref{eq:rewrite} is:
\begin{equation}\label{eq:nok}
M = n + \sum_{q=1}^{m} \abs{\mathcal{N}_q} \left(\abs{\mathcal{N}_q} - 1 \right),
\end{equation}
with $\abs{\mathcal{N}_q}$ the cardinality of the set $\mathcal{N}_q$.
\end{corollary}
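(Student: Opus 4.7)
The plan is to split the count of non-zero kernels in \eqref{eq:rewrite} into the contribution from the $f_i$ and the contribution from the $h_{ij}$. Each point $i \in \mathcal{P}$ contributes exactly one diagonal kernel $f_i$, yielding $n$ kernels immediately. For the off-diagonal terms, part \emph{iii)} of Theorem~\ref{thm:maintheorem} asserts that $h_{ij} = h_{ji} = 0$ whenever $i,j$ are not nearest neighbors, so we only have to count ordered pairs $(i,j)$ of distinct nearest neighbors in $\mathcal{P}$. Organizing these ordered pairs by which set of nearest neighbors they live in, each $\mathcal{N}_q$ contributes $|\mathcal{N}_q|(|\mathcal{N}_q|-1)$ ordered pairs, so summing over $q$ produces the claimed total, provided every such pair is counted in exactly one $\mathcal{N}_q$.

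The crux is therefore to show that each ordered pair $(i,j)$ of nearest neighbors is contained in exactly one maximal set $\mathcal{N}_q$. Existence is immediate from Definition~\ref{def:setnn}: the pair $\{i,j\}$ is already a set of pairwise nearest neighbors and can be enlarged to a maximal such set. For uniqueness, the tree structure must be invoked. Suppose $\{i,j\} \subseteq \mathcal{N}_1 \cap \mathcal{N}_2$ with $\mathcal{N}_1 \neq \mathcal{N}_2$ and pick $k \in \mathcal{N}_1 \setminus \mathcal{N}_2$. By maximality of $\mathcal{N}_2$, there exists $\ell \in \mathcal{N}_2$ that is not a nearest neighbor of $k$, i.e., some point of $\mathcal{P}$ lies on the path $k \to \ell$. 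However, since $k$ and $\ell$ are both nearest neighbors of $i$ and of $j$, one can analyze the Steiner points of $\{i,j,k\}$ and $\{i,j,\ell\}$, which on a tree must both lie on the path $i \to j$. Because $i,j$ are nearest neighbors, no point of $\mathcal{P}$ sits on the segment of $i \to j$ between those Steiner points, and the nearest-neighbor relations $(i,k)$, $(j,k)$, $(i,\ell)$, $(j,\ell)$ exclude points from the branches carrying $k$ and $\ell$; together these force the path $k \to \ell$ to contain no point of $\mathcal{P}$, a contradiction. Hence $\mathcal{N}_1 = \mathcal{N}_2$.

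With uniqueness in hand, the bookkeeping is routine: summing $|\mathcal{N}_q|(|\mathcal{N}_q|-1)$ over $q$ counts every non-zero $h_{ij}$ exactly once, and adding the $n$ diagonal $f_i$ gives \eqref{eq:nok}. The main obstacle is precisely the uniqueness argument on the tree; everything else follows directly from Theorem~\ref{thm:maintheorem} and Definitions~\ref{def:nn}--\ref{def:setnn}.
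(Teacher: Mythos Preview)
Your approach is essentially the same as the paper's: split the count into the $n$ diagonal kernels $f_i$ and the off-diagonal kernels $h_{ij}$, then use Theorem~\ref{thm:maintheorem}(iii) to restrict the latter to nearest-neighbor pairs and count $|\mathcal{N}_q|(|\mathcal{N}_q|-1)$ ordered pairs per set. The paper's proof stops at that bookkeeping; your uniqueness argument (that a nearest-neighbor pair lies in exactly one $\mathcal{N}_q$) is an additional piece of rigor that the paper tacitly assumes but does not verify, and your Steiner-point argument on the tree handles it correctly.
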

\begin{proof} First, we prove equation \eqref{eq:nok}. For $n$ points, there are $n$ kernels $f_i$. In every set $\mathcal{N}_q$, there is kernels $h_{ij}$ between for every combination of points $i,j \in \mathcal{N}_q$, with $i \neq j$. Consequently, within a set $\mathcal{N}_q$, there are $\abs{\mathcal{N}_q} (\abs{\mathcal{N}_q} - 1)$ kernels. Hence, in total, there are $n+\sum_{q=1}^{m} \abs{\mathcal{N}_q} (\abs{\mathcal{N}_q} - 1)$ kernels.
\end{proof}

\begin{corollary}\label{col:sparseness}
\textbf{Sparseness.}
For a given tree graph and a given positive number $n$, the configurations of $n$ points that minimize the number $M$ of kernels in \eqref{eq:rewrite} have:
\begin{equation}
M = 3n-2.
\end{equation}
\end{corollary}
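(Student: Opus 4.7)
The plan is to use Corollary~\ref{col:nok}, which gives $M = n + \sum_{q=1}^m |\mathcal{N}_q|(|\mathcal{N}_q|-1)$, and bound the sum from below by a graph-theoretic argument on the nearest-neighbor graph $H$ whose vertex set is $\mathcal{P}$ and whose edges are the pairs of nearest neighbors (Definition~\ref{def:nn}). The sets $\mathcal{N}_q$ are precisely the maximal cliques of $H$ (Definition~\ref{def:setnn}), so the problem reduces to bounding $\sum_q \binom{|\mathcal{N}_q|}{2}$ from below and then exhibiting a configuration attaining the bound.

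For the lower bound, I would first show that $H$ is connected. Given any two points $i,j \in \mathcal{P}$, consider the unique path joining them in the underlying tree graph; the points of $\mathcal{P}$ encountered along this path form, in order, a chain of consecutive nearest-neighbor pairs in $H$, connecting $i$ to $j$. Hence $|E(H)| \ge n-1$. Next, every edge of $H$ is itself a clique of size two and therefore is contained in at least one maximal clique $\mathcal{N}_q$. Consequently,
\begin{equation*}
\sum_{q=1}^m \binom{|\mathcal{N}_q|}{2} \;\ge\; |E(H)| \;\ge\; n-1,
\end{equation*}
so that $\sum_q |\mathcal{N}_q|(|\mathcal{N}_q|-1) \ge 2(n-1)$, which after substitution into Corollary~\ref{col:nok} yields $M \ge 3n-2$.

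For achievability, I would exhibit an explicit configuration realizing the bound: place all $n$ points along a single edge of the tree graph (equivalently, along any path in the tree containing no branch node). Ordering them as $p_1,\dots,p_n$ along this edge, a point $p_l$ with $l \notin \{k,k+1\}$ cannot be a nearest neighbor of both $p_k$ and $p_{k+1}$ (the intermediate $p_k$ or $p_{k+1}$ lies on the path), so the maximal cliques of $H$ are exactly the consecutive pairs $\{p_k,p_{k+1}\}$; there are $m=n-1$ of them, each of size two. Corollary~\ref{col:nok} then gives $M = n + 2(n-1) = 3n-2$, and the case $n=1$ is trivial.

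I expect the only step requiring real care is the lower bound: specifically, pinning down (i) the connectedness of $H$, which is where the tree structure of the underlying graph enters essentially via uniqueness of paths, and (ii) the observation that each edge of $H$ lies in at least one maximal clique. Both are short once stated precisely, and neither requires tracking with what multiplicity an edge may appear across several $\mathcal{N}_q$'s, since we only need a one-sided inequality.
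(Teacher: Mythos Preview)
Your proof is correct and in fact more complete than the paper's. The paper's argument is essentially a one-line computation: it asserts that an optimal configuration has $|\mathcal{N}_q|=2$ for all $q$ and $m=n-1$, then plugs into Corollary~\ref{col:nok} to get $M=n+2(n-1)=3n-2$. No justification is offered for why these are the optimal values; the lower bound $M\ge 3n-2$ is left implicit.

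Your approach supplies precisely that missing half. By introducing the nearest-neighbor graph $H$, observing that the $\mathcal{N}_q$ are its maximal cliques, and using connectedness of $H$ (a clean consequence of unique paths in the underlying tree) together with the trivial edge-covering bound $\sum_q\binom{|\mathcal{N}_q|}{2}\ge |E(H)|$, you obtain $M\ge 3n-2$ rigorously. Your achievability construction then matches the paper's computation. So the two routes agree on the upper bound, but yours is strictly more informative: it explains \emph{why} configurations with all $|\mathcal{N}_q|=2$ and $m=n-1$ are optimal rather than merely asserting it.
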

\begin{proof}
When $n=1$ the statement is trivial. For an optimal configuration with $n>1$, $\abs{\mathcal{N}_q} = 2$ for all $m$ sets $\mathcal{N}_q$, and $m=n-1$. Hence $M = n + 2(n-1) = 3n-2$.
\end{proof}

From the viewpoint of computational efficacy, corollaries \ref{col:nok} and \ref{col:sparseness} indicate that there are `well-chosen' and `badly-chosen' ways for the input locations to be distributed on the tree graph. For a `well-chosen' configuration $\abs{\mathcal{N}} = 2$, or at least $\abs{\mathcal{N}} \ll n$, for all $\mathcal{N}$, whereas in worst case scenarios there is only a single set $\mathcal{N}$. 

\subsection*{An efficient method to integrate the system of Volterra integral equations}
Transforming \eqref{eq:rewrite} to the time domain results in a system of Volterra integral equations:
\begin{equation}\label{eq:temprewrite}
V_i(t) = \int_{0}^{t} f_i(t-s)I_i(s, V(x_i,s))\mathrm{d}s + \sum_{j \neq i}  \int_{0}^{t}  h_{ij}(t-s)V_j(s)\mathrm{d}s \hspace{4mm} \forall i,
\end{equation}
where all kernels $h_{ij}$ between points $i, j$ that are not nearest neighbors are zero due to theorem \ref{thm:maintheorem}. In this form, the SGF-formalism is well-suited to simulate neuron models. 

Let $O(n_k)$ denote the number of operations required to compute a convolution each time-step. If these convolutions were to be integrated naively, by evaluating the quadrature explicitly (the Quad approach), $n_k$ would denote the number of time-steps after which the kernel can be truncated. However, the kernels are computed in the frequency domain, so if a partial fraction decomposition in this domain can be found:
\begin{equation}
\begin{aligned}
f_i(\omega) & \approx \sum_{l=1}^{L_{i}} \frac{\gamma_i^l}{i\omega +\alpha_i^l} \\
h_{ij}(\omega) & \approx \sum_{l=1}^{L_{ij}} \frac{\gamma_{ij}^l}{i\omega +\alpha_{ij}^l},
\end{aligned}
\end{equation}
where $\Re(\alpha) < 0$, then in the time domain the kernels can be readily expressed as sums of one sided exponentials:
\begin{equation}\label{eq:vectorfit}
\begin{aligned}
f_i(t) & \approx \sum_{l=1}^{L_{i}} \gamma_i^l e^{\alpha_i^l t} \\
h_{ij}(t) & \approx \sum_{l=1}^{L_{ij}} \gamma_{ij}^l e^{\alpha_{ij}^l t},
\end{aligned}
\end{equation}
when $t\geq 0$, and $f_i(t)=h_{ij}(t)=0$ otherwise. Such a decomposition can be derived accurately by employing the vector fitting (VF) algorithm \citep{Gustavsen1999} (see appendix \S\ref{app:VF}). Consequently, the convolutions with the kernels become sums of convolutions with exponentials, which can be readily expressed as simple differential equation (see for instance \citep{Lubich2002}). We call this the Exp approach and $n_k$ is the number of exponentials per kernel in this case. Usually, for the Exp approach, $n_k$ is smaller than for the Quad approach. Nevertheless, many of the exponentials of the VF algorithm are used to approximate the small $t$ behavior of the kernels, and hence have a very short time-scale. The large $t$ behaviour is often described by one or a few exponentials. This suggest that a `mixed' approach could yield optimal performance, where for small $t$ the quadrature is computed explicitly, and for large $t$ the convolution is computed as an ODE (or the sum of a few ODE's). We now give a detailed account of the mixed approach.

Let $t=Nh$, with $h$ the integration step and $N$ a natural number, and let us assume that $V_i(\tau)$ is known for all $i$ and for $\tau \in \{ t-kh; k=1,\hdots,N \}$, and that we want to know $V_i(t+h)$, with $h>0$. We split the convolutions in equation \eqref{eq:temprewrite} into an quadrature term ($\int_{t-Kh}^{t+h}$) and a exponential term ($\int_{0}^{t-Kh}$):
\begin{equation}\label{eq:convsplit}
\begin{aligned}
V_i(t+h) = & \int_{t-Kh}^{t+h} f_i(t+h-s)I_i(s)\mathrm{d}s + \sum_{j \neq i}  \int_{t-Kh}^{t+h}  h_{ij}(t+h-s)V_j(s)\mathrm{d}s + \\ 
& \int_{0}^{t-Kh} f_i(t+h-s)I_i(s)\mathrm{d}s + \sum_{j \neq i}  \int_{0}^{t-Kh}  h_{ij}(t+h-s)V_j(s)\mathrm{d}s,
\end{aligned}
\end{equation}
where $K$ is a natural number that needs to be chosen.
We assume that between the temporal grid points $t-kh$ and $t-(k+1)h$, $V_j(t)$ can be approximated by a linear interpolation:
\begin{equation}\label{eq:vlin}
V_j(s) \approx V_j(t-kh) + \frac{V_j(t-kh)-V_j(t-(k+1)h)}{h} (s-t+kh), \hspace{4mm} t-(k+1)h \leq s \leq t-kh.
\end{equation}
With this and the exponential approximation \eqref{eq:vectorfit}, the quadrature term for convolutions with $V_j$ becomes:
\begin{equation}
\begin{aligned}
\int_{t-Kh}^{t+h} h_{ij}(t+h-s) & V_j(s)\mathrm{d}s \approx \\ 
& V_j(t+h) \sum_l \left[-\frac{\gamma_{ij}^l}{\alpha_{ij}^l} + \frac{\gamma_{ij}^l}{\alpha_{ij}^{l \, 2} h} \left(e^{\alpha_{ij}^l h} - 1 \right) \right] \\ 
+ \sum_{k=0}^{K-1} & V_j(t-kh)  \sum_l \left[ - \frac{\gamma_{ij}^l}{\alpha_{ij}^{l \, 2} h} \left( e^{\alpha_{ij}^l (k+1) h} - 2 e^{\alpha_{ij}^l k h} + e^{\alpha_{ij}^l (k-1) h} \right) \right] \\
+ & V_j(t-Kh) \sum_l \left[ -\frac{\gamma_{ij}^l}{\alpha_{ij}^l} e^{\alpha_{ij}^l K h} +\frac{\gamma_{ij}^l}{\alpha_{ij}^{l \, 2} h} \left( e^{\alpha_{ij}^l K h} - e^{\alpha_{ij}^l (K-1) h} \right) \right],
\end{aligned}
\end{equation}
and similarly for the convolutions with the input $I_i$. For the exponential term, we get the following:
\begin{equation}\label{eq:expterm}
\int_{0}^{t-Kh} h_{ij}(t+h-s)V_j(s)\mathrm{d}s = \sum_l  \gamma_{ij}^l e^{\alpha_{ij}^l (K+1) h} \int_{0}^{t-hK} e^{\alpha_{ij}^l (t-Kh-s)} V_j(s) \mathrm{d}s,
\end{equation}
where $u_{ij}^l(t-Kh) \equiv \int_{0}^{t-Kh} \gamma_{ij}^l e^{\alpha_{ij}^l (t-Kh-s)} V_j(s) \mathrm{d}s$ is the solution of an initial value problem:
\begin{equation}
\begin{aligned}
& \dot{u}_{ij}^l(t) = \alpha_{ij}^l u_{ij}^l(t) + \gamma_{ij}^l V_j(t) \\
& u_{ij}^l(t=0) = 0,
\end{aligned}
\end{equation}
whose value can be computed recursively:
\begin{equation}
u_{ij}^l(t-Kh) = e^{\alpha_{ij}^l h} u_{ij}^l(t-(K+1)h) + \int_{t-(K+1)h}^{t-Kh} e^{\alpha_{ij}^l s} V_j(s) \mathrm{d}s.
\end{equation}
Using again the linear approximation \eqref{eq:vlin}, this becomes:
\begin{equation}
\begin{aligned}
u_{ij}^l(t-Kh) \approx e^{\alpha_{ij}^l h} & u_{ij}^l(t-(K+1)h) \\
+ & V_j(t-Kh) \left[-\frac{\gamma_{ij}^l}{\alpha_{ij}^l} + \frac{\gamma_{ij}^l}{\alpha_{ij}^{l \, 2} h} \left(e^{\alpha_{ij}^lh} - 1 \right) \right] \\ 
+ & V_j(t-(K+1)h) \left[\frac{\gamma_{ij}^l}{\alpha_{ij}^l} e^{\alpha_{ij}^l h} - \frac{\gamma_{ij}^l}{\alpha_{ij}^{l \, 2} h} \left(e^{\alpha_{ij}^lh} - 1 \right) \right].
\end{aligned}
\end{equation}
Analogous considerations apply for the convolutions with the input $I_i$. 

Let us now define a matrix $H_0$ by:
\begin{equation}\label{eq:matH}
(H_0)_{ij} = \sum_l \left[-\frac{\gamma_{ij}^l}{\alpha_{ij}^l} + \frac{\gamma_{ij}^l}{\alpha_{ij}^{l \, 2} h} \left(e^{\alpha_{ij}^l h} - 1 \right) \right] \\
\end{equation}
and a tensor $H_1$ by:
\begin{equation}\label{eq:tenH}
\begin{aligned}
(H_1)_{ij}^k & = \sum_l \left[ - \frac{\gamma_{ij}^l}{\alpha_{ij}^{l \, 2} h} \left( e^{\alpha_{ij}^l (k+1) h} - 2 e^{\alpha_{ij}^l k h} + e^{\alpha_{ij}^l (k-1) h} \right) \right] \hspace{4mm} \text{ for } k = 0,\hdots, K-1 \\
(H_1)_{ij}^k & = \sum_l \left[ -\frac{\gamma_{ij}^l}{\alpha_{ij}^l} e^{\alpha_{ij}^l k h} +\frac{\gamma_{ij}^l}{\alpha_{ij}^{l \, 2} h} \left( e^{\alpha_{ij}^l k h} - e^{\alpha_{ij}^l (K-1) h} \right) \right] \hspace{4mm} \text{ for } k = K. 
\end{aligned}
\end{equation}
when $i,j$ are nearest neighbors, and $(H_0)_{ij}=(H_1)_{ij}^k=0$ otherwise. For the input kernels $f_i$, we define a vector $F_0$ (with $(F_0)_i$ an analogous to \eqref{eq:matH}) and a matrix $F_1$ (with $(F_1)_i^k$ analogous to \eqref{eq:tenH}).
Using these definitions, we may write equation \eqref{eq:convsplit} as:
\begin{equation}\label{eq:convsplitrewrite}
\begin{aligned}
V_i(t+h) = & \left[ (F_0)_i I_i(t+h) + \sum_{k=0}^{K} (F_1)_i^k I_i(t-kh) \right] \\ & +  \sum_{j \neq i} \left[ (H_0)_{ij} V_j(t+h) + \sum_{k=0}^{K}(H_1)_{ij}^k V_j(t-kh) \right] \\ 
& + \sum_{l} e^{\alpha_i^l (K+1) h} u_i^l(t-Kh) + \sum_{j \neq i} \sum_{l} e^{\alpha_{ij}^l (K+1) h} u_{ij}^l(t-Kh).
\end{aligned}
\end{equation}
To simplify the notation, we group all terms that do not contain voltage or input at time $t+h$ in a vector $\mathbf{k}(t)$, for which:
\begin{equation}\label{eq:kvect}
\begin{aligned}
k_i(t) = & \left[\sum_{k=0}^{K} (F_1)_i^k I_i(t-kh) \right] + \sum_{j \neq i} \left[ \sum_{k=0}^{K}(H_1)_{ij}^k V_j(t-kh) \right] \\ 
& + \sum_{l} e^{\alpha_i^l (K+1) h} u_i^l(t-Kh) + \sum_{j \neq i} \sum_{l} e^{\alpha_{ij}^l (K+1) h} u_{ij}^l(t-Kh)
\end{aligned}
\end{equation}
Consequently, equation \eqref{eq:convsplitrewrite} becomes in matrix form:
\begin{equation} \label{eq:solve}
(\mathbb{I} - H_0) \mathbf{V}(t+h) = \text{diag}(\mathbf{F_0}) \mathbf{I}(t+h) + \mathbf{k}(t).
\end{equation}
Solving this matrix equation then gives $ \mathbf{V}(t+h)$.

Note that when the input current $I_i$ depends on the local voltage, equation \eqref{eq:solve} is only semi-implicit, since the voltage at time $t$ would be required to compute the current at time $t+h$. This description may be unstable for certain ion channels. However, the direct dependence of most currents on the voltage is linear (in the case of ion-channels for instance, non-linearities arise through the non-linear dependence of state variables on the voltage):
\begin{equation}\label{eq:Iexpr}
I_i(t, V(x_i,t)) = c_i(t) + d_i(t) V(x_i,t).
\end{equation}
Using this, equation \eqref{eq:solve} can be modified in the following way:
\begin{equation}\label{eq:extsolve}
(\mathbb{I} - H_0-\text{diag}(\mathbf{F_0} \odot \mathbf{d}(t+h)) \mathbf{V}(t+h) = \text{diag}(\mathbf{F_0}) \mathbf{c}(t+h) + \mathbf{k}(t).
\end{equation}
where $\odot$ denotes the element wise multiplication. This way of integration, while still being semi-implicit (since the equations for state variables of ion channels \eqref{eq:chan} are still integrated explicitly) is stable and standard in neuroscientific applications \citep{Hines1984}.

Note furthermore that the matrix $(\mathbb{I} - H_0) $ resp. $\mathbb{I} - H_0-\text{diag}(\mathbf{F_0} \odot \mathbf{d}(t+h)$ is structurally symmetric: whenever an off-diagonal element is nonzero, its counterpart opposite to the diagonal is nonzero as well. When $\abs{\mathcal{N}} = 2$ for all $\mathcal{N}$, and when the input locations are ordered in the right way, this matrix is a Hines matrix \citep{Hines1984}. For such a matrix a linear system of equations of the form \eqref{eq:solve} resp. \eqref{eq:extsolve} can be solved for $\mathbf{V}$ in $O(n)$ steps instead of the usual $O(n^3)$ steps.

Note also that we did not discuss the initialization steps of this algorithm explicitly. We omitted this discussion since, in neuroscience, it usually is simply assumed that the neuron model is at equilibrium at all times $t<0$. Hence, the vectors $\mathbf{k}(t=0), \mathbf{k}(t=h), \mathbf{k}(t=2h), \hdots $ can be computed easily by assuming that $V(-kh) = 0$ for all required $k$'s.

Finally, we remark that $K$, the parameter in the algorithm which determines the limit $Kh$ below which the quadrature is evaluated explicitly, can be chosen to minimize the number $n_k$ of operations per kernel. In principle, $K$ could be chosen separately for each kernel. In the present derivation we however opted not to do so for simplicity.

\subsection*{Nonlinear terms and the small $\Delta x$ limit of the SGF formalism}
While a set of ion channels that behaves approximately linear may be incorporated directly in the SGF formalism, the question what to do with channels that act truly non-linear remains. Such channels can be moved from the left hand side of equation \eqref{eq:cable} to the right hand side, and treated as an input current that depends on the local voltage. This current needs to be `compartmentalized', i.e. expressed at a discrete number of input locations, with a certain separation $\Delta x$.

A `recompartmentalization' of this type leads to the question of what the relation between the SGF formalism and the canonical second order finite difference approximation is. In both approaches, the input is compartmentalized. In the finite difference approximation however, the entire cable (i.e. the longitudinal, capacitive and leak currents) is compartmentalized as well, whereas in the SGF formalism, the cable is treated exactly. Consequently, in spatial regions with truly non-linear ion channels, the accuracy of the SGF is equivalent to (or better than, since the linear currents are still treated exactly) the accuracy of the canonical second order finite difference approximation. 

The observation that the SGF formalism treats the cable exactly then leads to the question whether the second order finite difference approximation can be derived from it, as both approaches describe the voltage at a given location only as a function of the voltage at the neighboring locations and the input at that location. We show in a simplified setting that, when the distance $\Delta x$ between the input locations is sufficiently small, the formalism reduces to the second order finite difference approximation of the original equation. 

Although we expect that the reduction of the SGF formalism to the second order finite difference approximation is valid for all equations of the type \eqref{eq:PDE} and for arbitrary tree graphs, its derivation can become prohibitively complex. We therefore constrain ourselves to the passive cable equation \eqref{eq:cable} and to the case where this equation is defined on a line of length 1, with a homogeneous boundary condition at each end. We suppose that there are $n-1$ input locations, distributed evenly with spacing $\Delta x = \slfrac{1}{n}$.
\begin{equation}\label{eq:simpcable}
\begin{cases}
\frac{\partial V}{\partial t}(x,t) + \frac{\partial^2 V}{\partial x^2}(x,t) - V(x,t) = \sum_{i=1}^{n-1} I_i(t) \delta(x-i\Delta x). \\
\frac{\partial V}{\partial x}(0,t) = \hat{L}_A V(0,t) \\
\frac{\partial V}{\partial x}(1,t) = \hat{L}_B V(1,t) 
\end{cases}
\end{equation}
Note that the input currents $I_i(t)$ can depend on the local voltage $V(x_i,t)$, as they can be the result of a discretization of the ion channel currents.

\paragraph{Second order finite difference approximation}
To obtain the second order finite difference approximation of equation \eqref{eq:simpcable}, we replace
\begin{equation}
\frac{\partial^2 V}{\partial x^2}(x_i,t) \rightarrow \frac{V_{i+1}(t) - 2V_i(t) + V_{i-1}(t)}{\Delta x^2}
\end{equation}
and we average the inputs for each compartment:
\begin{equation}
\frac{1}{\Delta x} \int_{(i-\slfrac{1}{2})\Delta x}^{(i+\slfrac{1}{2})\Delta x}  I_i(t) \delta(x-i\Delta x) \mathrm{d}x = \frac{I_i(t)}{\Delta x}
\end{equation}
Consequently, we find for $i=1,\hdots,n-1$:
\begin{equation}
\frac{V_{i+1}(t) - 2V_i(t) + V_{i-1}(t)}{\Delta x^2} + \dot{V}_i(t) - V_i(t) = \frac{I_i(t)}{\Delta x}.
\end{equation}
The boundary condition at $x=0$ becomes:
\begin{equation}
\frac{V_1(t)-V_0(t)}{\Delta x} =  \hat{L}_A^{(t)} V_0(t),
\end{equation}
and an analogous expression applies for the boundary condition at $x=1$

\paragraph{Reduction of the SGF formalism}
The Fourier transform of the problem \eqref{eq:simpcable} reads:
\begin{equation}\label{eq:cablefourier}
\begin{cases}
\frac{\partial^2 \fV}{\partial x^2}(x,\omega) - \gamma(\omega)^2 \fV(\omega) = \sum_{i=1}^n \fI_i(\omega) \delta(x-x_i) \\
\frac{\partial \fV}{\partial x}(0,\omega) = \tilde{L}_A(\omega)\fV(0,\omega) \\
\frac{\partial \fV}{\partial x}(1,\omega) = \tilde{L}_B(\omega)\fV(1,\omega) ,
\end{cases}
\end{equation}
where $\tilde{L}_A$ and $\tilde{L}_B$ are polynomials in $\omega$ representing the respective Fourier transforms of $\hat{L}_A$ and $\hat{L}_B$, and where $\gamma^2(\omega) = i\omega - 1$. In the following the coordinate $\omega$ is dropped for notational clarity. We define two linearly independent solutions to the homogeneous problem:
\begin{equation}\label{eq:sol}
\begin{aligned}
u_A(x) = e^{\gamma x} + k_A e^{-\gamma x} \\
u_B(x) = e^{\gamma x} + k_B e^{-\gamma x}
\end{aligned}
\end{equation}
that satisfy the boundary conditions respectively in $x=0$ ($u_A$) and $x=1$ ($u_B$). Then the Green's function of problem \eqref{eq:cablefourier} is given by \citep[page 66]{Stakgold1967}:
\begin{equation}\label{eq:gf}
g_{ij} = 
\begin{cases}
\frac{u_A(x_i)u_B(x_j)}{-2 \gamma (k_B - k_A)} \hspace{4mm} \text{ if } x_i \leq x_j\\
\frac{u_B(x_i)u_A(x_j)}{-2 \gamma (k_B - k_A)} \hspace{4mm} \text{ if } x_i \geq x_j.
\end{cases}
\end{equation}
Consequently, the attenuation functions are given by:
\begin{equation}\label{eq:att}
a_{ij} = 
\begin{cases}
\frac{u_A(x_i)}{u_A(x_j)} \hspace{4mm} \text{ if } x_i \leq x_j\\
\frac{u_B(x_i)}{u_B(x_j)} \hspace{4mm} \text{ if } x_i \geq x_j.
\end{cases}
\end{equation}
According to theorem \ref{thm:maintheorem}, problem \eqref{eq:cablefourier} is then fully defined by the following kernels:
\begin{itemize}
\item for $i=0,\hdots,n-2$
\begin{equation}
h_{i+1,i} = \frac{a_{i+1,i} - a_{i+1,i+2}a_{i+2,i}}{1 - a_{i, i+2}a_{i+2,i}} = a_{i+1,i} \frac{1 - a_{i+1,i+2}a_{i+2,i+1}}{1 - a_{i, i+2}a_{i+2,i}}
\end{equation}
\item for $i=2, \hdots, n-1$
\begin{equation}
h_{i,i+1} = \frac{a_{i,i+1} - a_{i,i-1}a_{i-1,i+1}}{1 - a_{i+1, i-1}a_{i-1,i+1}} = a_{i,i+1} \frac{1 - a_{i,i-1}a_{i-1,i}}{1 - a_{i+1, i-1}a_{i-1,i+1}}
\end{equation}
\item $h_{01} = a_{01}$
\item $h_{n,n-1} = a_{n,n-1}$
\item for $i=1,\hdots,n-1$
\begin{equation}
f_i = g_{ii} \frac{(1-a_{i,i-1}a_{i-1,i}) (1-a_{i+1,i}a_{i,i+1})}{1-a_{i+1,i-1}a_{i-1,i+1}}
\end{equation}
\end{itemize}
Using equations \eqref{eq:att} and \eqref{eq:sol}, it can be checked that:
\begin{equation}
1-a_{ij}a_{ji} = 
\begin{cases}
\frac{2 (k_B-k_A) \sinh( \gamma (x_j-x_i))}{u_B(x_i)u_A(x_j)} \hspace{4mm} \text{ if } x_i \leq x_j \\
\frac{2 (k_B-k_A) \sinh( \gamma (x_i-x_j))}{u_A(x_i)u_B(x_j)} \hspace{4mm} \text{ if } x_i \geq x_j.
\end{cases}
\end{equation}
Consequently, it follows that:
\begin{itemize}
\item for $i=0,\hdots,n-2$
\begin{equation}
h_{i+1,i} = \frac{\sinh(\gamma \Delta x)}{\sinh (2 \gamma \Delta x)}
\end{equation}
\item for $i=2, \hdots, n-1$
\begin{equation}
h_{i,i+1} = \frac{\sinh(\gamma \Delta x)}{\sinh (2 \gamma \Delta x)}
\end{equation}
\item for $i=1,\hdots,n-1$
\begin{equation}
f_i = -\frac{1}{\gamma} \frac{(\sinh(\gamma \Delta x))^2}{\sinh (2 \gamma \Delta x)}
\end{equation}
\end{itemize}
With these equations, equation \eqref{eq:rewrite} for $i=1,\hdots,n-1$ becomes:
\begin{equation}
\fV_i =  -\frac{1}{ \gamma} \frac{(\sinh(\gamma \Delta x))^2}{\sinh (2 \gamma \Delta x)} \fI_i + \frac{\sinh(\gamma \Delta x)}{\sinh (2 \gamma \Delta x)} \left( \fV_{i-1} + \fV_{i+1} \right).
\end{equation}
Since $\Delta x$ is small, the $\sinh$-functions can be expanded. This gives:
\begin{equation}
\fV_i =  - \frac{\Delta x}{2 + \frac{8}{6} \gamma^2 \Delta x^2} \fI_i + \frac{1 + \frac{1}{6} \gamma^2 \Delta x^2}{2 + \frac{8}{6} \gamma^2 \Delta x^2} \left( \fV_{i-1} + \fV_{i+1} \right).
\end{equation}
Multiplying both sides by the denominator $2 + \frac{8}{6} \gamma^2 \Delta x^2$, rearranging terms and dividing by $\Delta x^2$ then gives:
 \begin{equation}
\frac{\fV_{i-1} - 2\fV_i + \fV_{i+1}}{\Delta x^2} -  \frac{8}{6} \gamma^2 \fV_i + \frac{1}{6} \gamma^2 \left( \fV_{i-1} + \fV_{i+1} \right) =  \frac{1}{\Delta x} \fI_i.
\end{equation}
Averaging $\fV_{i-1} + \fV_{i+1} \approx 2 \fV_i$ and transforming the resulting equation back to the time domain then results precisely in the finite difference approximation for $i=1,\hdots,n-1$.

Let us now investigate equation \eqref{eq:rewrite} when $i=0$. Here, it holds that:
\begin{equation}
\fV_0 = a_{01}\fV_1.
\end{equation}
Substituting the explicit form of $a_{01} = \slfrac{u_A(0)}{u_A(\Delta x)}$ in this equation and expanding the exponentials up to first order leads to:
\begin{equation}\label{eq:V0}
\fV_0 = \frac{1}{1 + \frac{1-k_A}{1+k_A} \gamma \Delta x} \fV_1
\end{equation}
Since $u_A$ satisfies the boundary condition at $x=0$, it can be checked that $\gamma \slfrac{(1-k_A)}{(1+k_A)} = \tilde{L}_A$, so that equation \eqref{eq:V0} can be rewritten as:
\begin{equation}
\tilde{L}_A \fV_0 = \frac{\fV_1-\fV_0}{\Delta x}.
\end{equation}
Transforming this equation back to the time domain yields precisely the finite difference approximation for $V_0$. Analogous considerations apply for $V_n$.

\subsection*{Implementation}
In this section we summarize the steps one has to undertake to implement the SGF formalism. Such an implementation consists of two main parts: an initialization part and a simulation part. 

\noindent The initialization part consists of the following steps:
\begin{enumerate}
\item For any given set of input locations and a tree graph, a routine has to be implemented that can identify the sets of nearest neighbors. Such a routine could for instance return a list, where each element is a list in itself, containing the input locations that constitute a set of nearest neighbors.
\item The GF's $g_{ij}(\omega)$ between every pair of elements in the sets of nearest neighbors have to be computed. We implemented the algorithm of \citep{Koch1985} for this purpose. Note that for a set $\mathcal{N}$ of cardinality $\abs{\mathcal{N}}$, only $\slfrac{\abs{\mathcal{N}}(\abs{\mathcal{N}}+1)}{2}$ function $g_{ij}$ have to be computed, since $g_{ij}=g_{ji}$ \citep{Koch1998}. From these kernels, the attenuation functions can be derived easily.
\item The kernels $f_i$ and $h_{ij}$ have to be computed. This can be done either by evaluating the formulas \eqref{eq:Kin} and \eqref{eq:Ktrans} explicitly (when all matrices of which the determinants have to be computed are small) or by the derived formulas:
\begin{equation}
\begin{aligned}
f_i & = g_{ii}\frac{1}{A(\mathcal{N}_{1} \cup \hdots \cup \mathcal{N}_{p})^{-1}} \\
h_{ij}  & = - \frac{A(\mathcal{N}_{1} \cup \hdots \cup \mathcal{N}_{p} \cup \mathcal{N})_{ij}^{-1}}{A(\mathcal{N}_{1} \cup \hdots \cup \mathcal{N}_{p} \cup \mathcal{N})_{ii}^{-1}}.
\end{aligned}
\end{equation}
Note that in both cases, attenuation function are needed that are not directly computed in the previous step, since attenuation matrices of unions of sets are considered. This is not a problem, as these attenuation functions can be easily reconstructed from the transitivity property.
\item The partial fraction decomposition of each kernel has to be computed. The VF algorithm \citep{Gustavsen1999} (appendix \S\ref{app:VF}) is well suited for this purpose. When the parameters of the partial fraction decomposition are known, the vector $F_0$, the matrices $F_1$ and $H_0$ and the tensor $H_1$ can be computed, as described in \eqref{eq:matH} and \eqref{eq:tenH}. Since $H_0$ resp. $H_1$ are sparse in their indices $i$ and $j$, they can be stored as index-number resp. index-array pairs.
\end{enumerate}
\noindent For the simulation part, two main routines have to be called each time step:
\begin{enumerate}
\item A routine that computes the vectors $\mathbf{c}(t)$ and $\mathbf{d}(t)$ \eqref{eq:Iexpr}. This routine also advances synaptic conductances and ion channel state variables and its details are well established in the neuroscientific literature (see \citep{Rotter1999} and \citep{Hines1984}).
\item A routine that first computes the vector $\mathbf{k}(t)$ \eqref{eq:kvect}  and then solves equation \eqref{eq:extsolve}.
\end{enumerate}
We implemented a prototype of the above outlined initialization algorithm in Python. Tree structures were implemented using the btmorph library \citep{Torben-nielsen2014a}. The simulation algorithm was implemented both in pure Python and in C++ with a Cython interface.

\section{Validation}

\begin{figure}
   \centering
   \includegraphics[width=\textwidth]{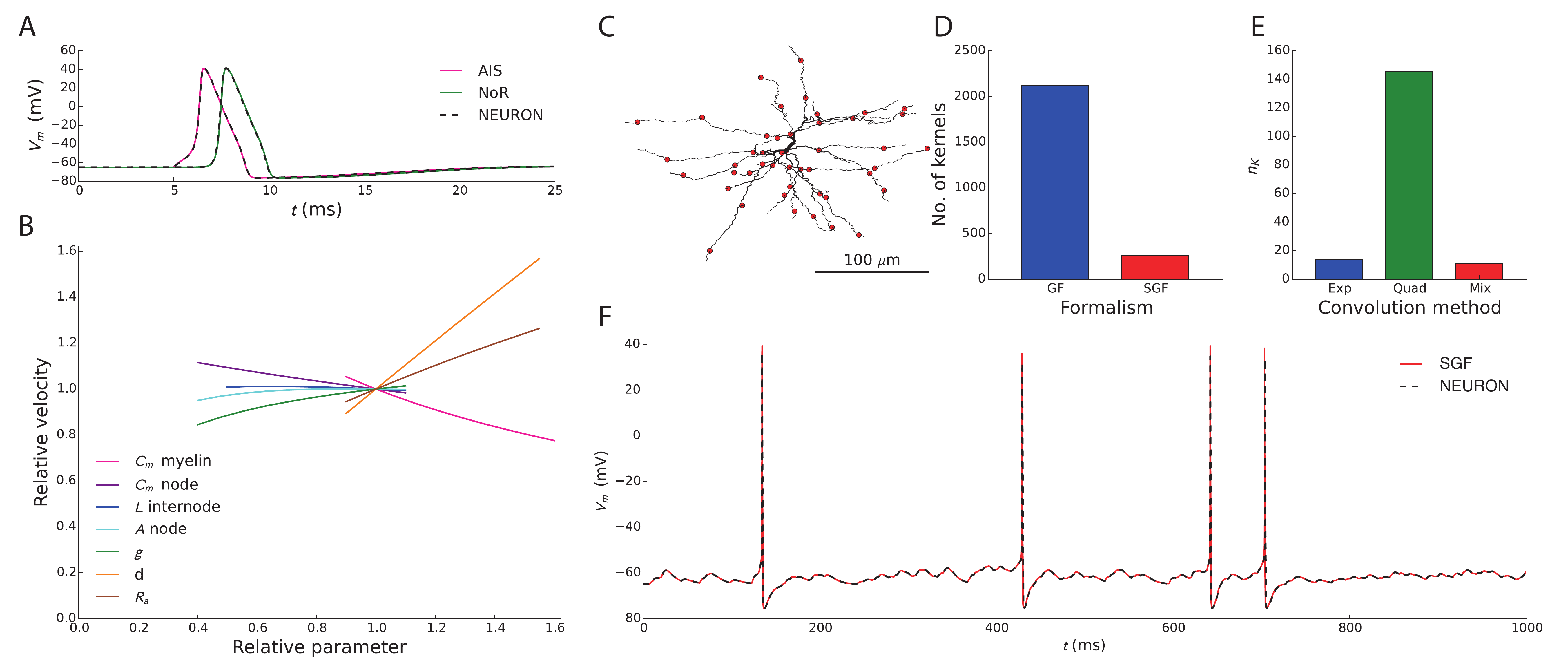}
   \caption{\textbf{Validation of the SGF formalism on axon (A-B) and dendrite models (C-F).} We implemented the axon model described in \citep{Moore1978} in the SGF formalism. A: validation of the action potential (AP) shape by comparison with an equivalent \textsc{neuron} model at the axon initial segment (AIS) and the 10'th node of Ranvier (NoR). B: reproduction of Fig~2 in \citep{Moore1978}, where the dependence of AP velocity on different biophysical parameters is studied. C: The dendritic morphology together with 50 synaptic input locations (the morphology was retrieved from the NeuroMorpho.org repository \citep{Ascoli2006} and originally published in \citep{Wang2002}). D: The number of kernels in the SGF formalism, compared to the number of kernels that would have been required in the normal GF formalism. E: The number of operation required per kernel to achieve similar levels of accuracy for 3 approaches: Exp, where all the exponentials from the VF algorithm are integrated, Quad, where the quadrature is computed explicitly and Mix, where we compute the quadrature for the first $K$ (here 3) steps, and use the ODEs to compute the rest of the convolution. F: Voltage trace at the soma upon stimulation of the synapses with 5 Hz Poisson spike trains.}
   \label{fig:validation}
\end{figure}

In myelinated axons, the approximation of grouping active currents at a discrete set of input locations holds exactly, as the only spots where active currents are present are the nodes of Ranvier. These nodes are separated by stretches of myelinated fiber of up to 2 mm in length, which can be modeled by equation \eqref{eq:cable}. We reproduce the model of \citep{Moore1978}, where the soma, axon initial segment and nodes of Ranvier are equipped with Hodgkin-Huxley \citep{HodgkinA.L.andHuxley1990} channels, so that the input current is of the form:
\begin{equation}\label{eq:HH}
I_i(t) = - \bar{g}_{Na} m_i(t)^3 h_i(t) (V(x_i,t)-E_{Na}) + \bar{g}_K n_i(t)^4 (V(x_i,t)-E_K) - \bar{g}_L (V(x_i,t)-E_L).
\end{equation}
Consequently, $c_i(t)$ and $d_i(t)$ in equation \eqref{eq:Iexpr} are given by:
\begin{equation}\label{eq:substitute}
\begin{aligned}
c_i(t) & =  \bar{g}_{Na} m_i(t)^3 h_i(t) E_{Na} + \bar{g}_K n_i(t)^4 E_K + \bar{g}_L E_L \\
d_i(t) & = - \bar{g}_{Na} m_i(t)^3 h_i(t) - \bar{g}_K n_i(t)^4 - \bar{g}_L,
\end{aligned}
\end{equation}
and the variables $m_i(t), h_i(t)$ and $n_i(t)$ evolve according to the following equations:
\begin{equation}
\begin{aligned}
\dot{m}_i(t) & = \frac{m_{\text{inf}}(V(x_i,t)) - m_i(t)}{\tau_m(V(x_i,t))} \\
\dot{h}_i(t) & = \frac{h_{\text{inf}}(V(x_i,t)) - h_i(t)}{\tau_h(V(x_i,t))} \\ 
\dot{n}_i(t) & = \frac{n_{\text{inf}}(V(x_i,t)) - n_i(t)}{\tau_n(V(x_i,t))},
\end{aligned}
\end{equation}
where $m_{\text{inf}}, h_{\text{inf}}, n_{\text{inf}}, \tau_m, \tau_h$ and $\tau_n$ are functions of the local voltage.

We implemented this model using the SGF formalism. In Fig~\ref{fig:validation}A, we validate the action potential (AP) shape at the axon initial segment and at the 10'th node of Ranvier by comparison with the \textsc{neuron} simulator \citep{Carnevale2006} -- the gold standard in neuronal modeling. In Fig~\ref{fig:validation}B we reproduce Fig~2 in \citep{Moore1978}, where the dependence of the AP velocity on various axonal parameters is investigated.

Finally, to show that the SGF formalism can also handle complex trees, we equipped a stellate cell morphology, comprising an active soma and passive dendrites (Fig~\ref{fig:validation}C) (modeled by equation \eqref{eq:cable}), with 50 conductance based synapses, so that
\begin{equation}
\begin{aligned}
I_i(t) & = \bar{g} (A_i(t) - B_i(t)) (V(x_i,t) - E_r) \\
\dot{A}_i(t) & = \frac{-A_i(t) + c_A \sum_{j} \delta(t-s_i^{(j)})}{\tau_A} \\
\dot{B}_i(t) & = \frac{-B_i(t) + c_B \sum_{j} \delta(t-s_i^{(j)})}{\tau_B},
\end{aligned}
\end{equation}
where $s_i^{(j)}$ represents the $j$'th spike time of the input spike train arriving at the $i$'th synapse. In this equation, the variables $A$ and $B$ are used to generate a double exponential profile (see for instance \citep{Rotter1999}). In this model, it holds that:
\begin{equation}
\begin{aligned}
c_i(t) & = - \bar{g} (A_i(t) - B_i(t)) E_r \\
d_i(t) & = \bar{g} (A_i(t) - B_i(t)).
\end{aligned}
\end{equation}
Excellent agreement with the \textsc{neuron} simulator is achieved (Fig~\ref{fig:validation}F). The amount of kernels required is far lower than in the normal GF formalism (Fig~\ref{fig:validation}D) and optimal performance (for which we found $n_k \approx 7$) was achieved for $K=3$ (Fig~\ref{fig:validation}E).

\subsection*{Computational cost}
\begin{figure}
   \centering
   \includegraphics[width=.7\textwidth]{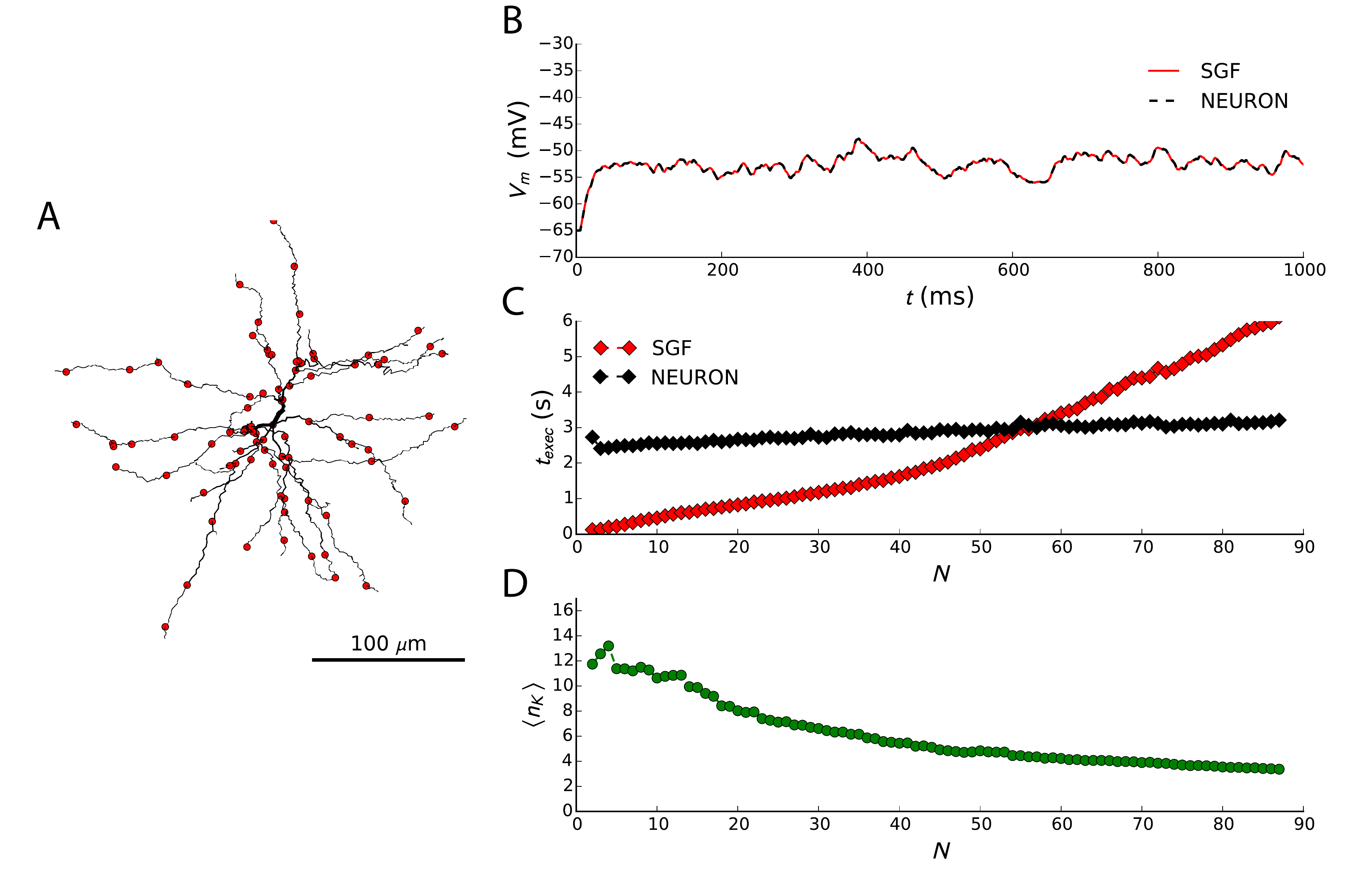}
   \caption{\textbf{The execution time of the SGF formalism compared to a \textsc{neuron} simulation.} A: The setup, showing the morphology and the maximal set of input locations $N=74$ input location. Note that this set was chosen so that $\abs{\mathcal{N}} = 2$ for all $\mathcal{N}$. B: The somatic voltage trace for the first second of a 10 s simulation, with a conductance based excitatory synapse at each input location indicated in A. C: Execution times as a function of the number of input locations. D: The average number of operations per kernel as a function of the number of input locations.}
   \label{fig:runtime}
\end{figure}

It is immediately clear that the computational cost of the SGF formalism is far lower than the computational cost of the GF formalism \citep{Wybo2013} from the number of required kernels alone (Fig~\ref{fig:validation}D). However, the comparison of computational cost with the canonical finite difference approaches requires a more careful discussion. For $n$ input locations and $n_t$ time steps, an explicit solver would typically require $O(n_k n_t n)$ steps, whereas a canonical finite difference approach would require $O(d_t n_t n_x)$ steps, with $n_x$ the number of spatial locations at which the finite differences have to be evaluated and $d_t$ the maximal degree of temporal differentiation in the operators $\hat{L}_i^{(t)}(x)$. Thus, we expect the SGF formalism to improve performance when $c n_k n < d_t n_x$, where $c$ is an a-priori unknown constant that may depend on the specific implementation. When the system is stiff, an implicit solver is typically needed to guarantee stability. Such solvers require a matrix of size $n$ (resp. $n_x$ in the case of finite differences) to be inverted each time step, normally an operation of complexity $O(n_x^3)$. However, for second order finite difference solvers, the tree graph introduces a special structure in this matrix \citep{Hines1984}, so that it can be inverted in $O(n_x)$ steps. In the SGF formalism, this structure can be maintained if $\abs{\mathcal{N}} = 2$ for all $\mathcal{N}$ and in this case the same performance criterion $c n_k n < d_t n_x$ applies as for explicit solvers. Note that for higher order finite difference approaches the matrix inversion still requires $O(n_x^3)$.

To test whether this constant $c$ is not excessively large, so that it may impede the usefulness of the SGF formalism, we compared the execution time of the SGF formalism with the execution time of the \textsc{neuron} simulator \citep{Carnevale2006}. In order to obtain a `fair' comparison, attention must be given to the exact type of implementation. For instance, comparing an explicit simulation paradigm with \textsc{neuron} makes little sense, as \textsc{neuron} implements a semi-implicit paradigm and would be severely at the disadvantage. On the other hand, \textsc{neuron} makes use of the Hines algorithm to invert the systems' matrix \citep{Hines1984}, and distributing input location completely at random in the SGF formalism (and using a semi-implicit paradigm as well) would exclude the use of this algorithm, which would severely disadvantage the SGF formalism. We therefor opted to implement a simulator in C++, that implements the same semi-implicit method described in equations \eqref{eq:extsolve} and \eqref{eq:solve}, but where the inversion uses the Hines algorithm. This restricts the input locations to configurations where $\abs{\mathcal{N}} = 2$ for all $\mathcal{N}$. We distributed 2 to 74 input locations in such a configuration (Fig~\ref{fig:runtime}A shows this configuration for $n=74$) on the same stellate cell morphology used in Fig~\ref{fig:validation} (panels C-E). No active channels where added to the model. In \textsc{neuron}, the cell model contained 299 compartments for a total dendritic length of approximately $4000$ $\mu$m ($\Delta x \approx 13.5$ $\mu$m). We added one excitatory, conductance based synapse at each input location and gave it a Poisson spike train of a certain rate, so that the total presynaptic rate at all synapses together was 1000 Hz. In Fig~\ref{fig:runtime}B we show the first second of a 10 s simulation where $n=74$. It can be seen that both traces agree impeccably. We then compared the execution times of both models for an increasing number of input locations (Fig~\ref{fig:runtime}C) for a simulation time of 10 s at a time step of 0.1 ms. It can be seen that the SGF implementation outperforms the \textsc{neuron} model until $n \approx 55$ (on a machine equipped with an Intel Core i7-3770 CPU @ 3.4GHz and 16 GB of RAM, running ubuntu 14.04 LTS). In Fig~\ref{fig:runtime}D, we plotted $\langle n_k \rangle$, the average number of operations per kernel and per time step. It can be seen that this number correlates with the execution time. When there are few input locations, kernels are in general longer and hence this number is higher. Then, when more kernels are added, this number drops quite rapidly. This explains the small slope of the execution times in the SGF formalism until $n \approx 35$, associated with the rapid decrease of $\langle n_k \rangle$. After that, $\langle n_k \rangle$ decreases slower, and hence the slope of the execution times becomes steeper.

\section{Summarizing remarks}
We have proven that on tree graphs, the GF formalism for linear, non-homogeneous, time-invariant PDE's with inputs at a discrete set of $n$ well-chosen locations requires only $O(n)$ rather than $n^2$ kernels, and termed this the SGF formalism. We discussed the meaning of `well-chosen', namely that the sizes of the sets of nearest neighbors must be small (ideally $\abs{\mathcal{N}} = 2$ for all $\mathcal{N}$). We have shown furthermore for equation \eqref{eq:cable} that in the limit of a small distances between the input locations, the SGF formalism can be reduced to the second order finite difference approximation. Thus, in some sense, the SGF formalism can be seen as a generalization of the second order finite difference approximation to arbitrary distance steps (as long as what lies between the input locations is approximately linear). We also employed the VF algorithm \citep{Gustavsen1999}, that fits the kernels with sums of exponentials, to design an efficient simulation algorithm. 

We then validated our algorithm on two neuroscientific problems: the modeling of myelinated axons and of dendritic integration. We showed that there was excellent agreement between the SGF formalism and the de facto standard \textsc{neuron} simulator. We found furthermore that, even for complex, branched morphologies, excellent accuracy was achieved for $n_k \lesssim 15$. We also discussed when an increase in computational performance is expected from the SGF formalism, and furthered our findings by comparing an efficient C++ implementation of the SGF formalism with the same model in \textsc{neuron}. Finally, we discuss in a broader sense when the SGF formalism is expected to provide advantage over other approaches.

\section{Discussion}

It can be seen that the question whether one should prefer the SGF formalism over second order finite difference approaches depends on the values of $n_k$ and $n$, which are not necessarily independent. Fig~\ref{fig:runtime} illustrates this: for the same dendritic arborization, and using the passive cable equation ($d_t=1$), $n_k$ ranges from $\approx 14$ for $n=2$ to $\approx 4$ for $n=74$. This leads to $n_k n \approx 28$ to $296$ operations per time step. The dendritic arborization on the other hand has a total length of approximately $4000$ $\mu$m. According to the Lambda rule \citep{Carnevale2006}, the  canonically accepted method of calculating the distance step between compartments,  $\Delta x \approx 10-15$ $\mu$m. Consequently, a second order finite difference approximation would use approximately $n_x \approx 300$. Hence this estimate would indicate that performance can be gained when $n \lesssim 75$. That we found this number to be slightly lower ($n \lesssim 55$) in the simulations we conducted (Fig~\ref{fig:runtime}) may be due to implementation and optimization related factors. One important remark is that for efficient semi-implicit simulations, the input locations must be chosen so that $\abs{\mathcal{N}} = 2$ for all $\mathcal{N}$ to be able to use the Hines algorithm. The question may now be asked whether this is overly restrictive. We believe that this is not the case: placing an input location at the soma for instance separates all main dendrites. Bifurcations in the dendrites may induce further sets of nearest neighbors of sizes slightly larger than 2. Whenever this occurs however, an input location can be added at the bifurcation point, so that such a set is split as well.

Thus the choice of the SGF formalism over the finite difference approximation depends on the neural system at hand. In some systems, such as bipolar neurons used in auditory coincidence detection \citep{Agmon-Snir1998, Wybo2013}, inputs occur only at a small set of locations, and the cells' computation is performed by linear membrane properties. Other systems, such as myelinated axons, have non-linear membrane currents that are concentrated at a discrete set of locations -- the nodes of Ranvier, with stretches of myelinated fiber in between that behave approximately linear. It is to be expected that in such systems, the SGF formalism may significantly improve performance over the second order finite difference approach. In other systems, such as cortical cells, inputs are distributed throughout the dendritic arborization in an almost continuous fashion. In these cells, the answer to the question whether the SGF formalism yields computational advantage is negative, when one aims at retaining all biophysical detail. Indeed, it is clear that when the number of input locations in the SGF formalism approaches the number of compartments \textsc{neuron} requires, there is no computational gain in using the former. Nevertheless, computational neuroscientists have been seeking ways of reducing the cost of simulating these cells to be able to use them in large scale network simulations. Most often, they aim to achieve this by drastically reducing the number of compartments \citep{Traub2005}. This approach however also changes the spatio-temporal response properties of the nerve cells. With the SGF formalism, inputs that would otherwise be grouped in a small number of compartments may now be grouped at a small number of input locations, while the response properties induced by the neuronal morphology would remain unchanged.  Finally, in dendritic arborizations, a number of resonance phenomena have been observed that can be modeled with linearized (quasi-active) ion channels \citep{Koch1984, Laudanski2014}. Since the transitivity property (see appendix \S\ref{app:VF})) holds for equations of this type as well (of the general form \eqref{eq:PDE}), such systems can be modeled implicitly in the SGF formalism. Nevertheless, the validity of this linearization depends on the size of the fluctuations around the operating point and the ion channel under study and has to be checked.

In the SGF formalism there is a significant initialization phase before a neuron model can be simulated (see Implementation). The computational cost of this phase is higher than the cost of the initialization phase for finite difference approaches, and depends on the complexity of the tree graph and the number of input locations. We thus expect that our SGF formalism will be advantageous in use cases where frequent re-initialization of the model is not required. This is typically the case for network simulations, where a limited number of prototypical nerve cells may be initialized and used throughout the network.

Another important matter, next to computational cost, is the accuracy of the SGF formalism. While the sparsification is exact, the transform back to the time-domain along with the specific integration algorithm might introduce errors. The use of the approximate VF algorithm however impedes a systematic analysis of these errors. Nevertheless, we found that typically this error was very low (Fig~\ref{fig:VF}B,C). Furthermore, after application of the integration paradigm described in this work, where the convolutions with exponentials are integrated analytically assuming that the voltage varies linearly in between grid points, we found that all our numerical experiments agreed very well with equivalent \textsc{neuron} simulations.

\section*{Acknowledgments}
We thank Luc Guyot for many insightful comments on the manuscript. The research leading to these results has received funding from the European Union - Seventh Framework Programme  [FP7/2007-2013] under grant agreements no. 269921 (BrainScaleS) and no. 604102  (The HUMAN BRAIN PROJECT) as well as funding from the EPFL Blue Brain Project Fund and the ETH Board to the Blue Brain Project.

\appendix
\section{Proof of Lemma \ref{thm:transitivity}} \label{app:trans}
\begin{proof}
\textbf{The PDE defined on a line}\\
Consider a line of length $L$ ($0 \leq x \leq L$), which trivially has two leafs ($\Lambda = \{\lambda_1, \lambda_2 \}$). Fourier transforming a PDE such as \eqref{eq:PDE} leads to a boundary value problem of the form:
\begin{equation}\label{eq:boundval}
\fL_0(x,\omega)\frac{\partial^2}{\partial x^2}\fV(x,\omega) + \fL_1(x,\omega)\frac{\partial}{\partial x}\fV(x,\omega) + \fL_2(x,\omega)\fV(x,\omega) = \fI(x,\omega), \hspace{4mm} 0 < x < L \\
\end{equation}
\begin{equation}\label{eq:boundcond}
\begin{aligned}
\hat{B}^{\lambda_1}\fV(0,\omega) \defeq \fL_1^{\lambda_1}(\omega) \frac{\partial}{\partial x} \fV(L,\omega) + \fL_2^{\lambda_1}(\omega) \fV(0,\omega) = \fI^{\lambda_1}(\omega) \\
\hat{B}^{\lambda_2}\fV(0,\omega) \defeq \fL_1^{\lambda_2}(\omega) \frac{\partial}{\partial x} \fV(L,\omega) + \fL_2^{\lambda_2}(\omega) \fV(L,\omega) = \fI^{\lambda_2}(\omega),
\end{aligned}
\end{equation}
The Green's functions $g(x,x_0,\omega)$ is obtained from solving this problem for $\fI(x,\omega)=\delta(x-x_0)$ and $\fI^{\lambda_1}(\omega)=\fI^{\lambda_2}(\omega)=0$, and is given in \citep[page 66]{Stakgold1967} :
\begin{equation}\label{eq:gf}
g(x,x_0) = 
\begin{cases}
\frac{u_1(x)u_2(x_0)}{a_0(x_0)W(u_1,u_2;x_0)}, \hspace{4mm} 0 \leq x \leq x_0 \\
\frac{u_1(x_0)u_2(x)}{a_0(x_0)W(u_1,u_2;x_0)}, \hspace{4mm} x_0 \leq x \leq L,
\end{cases}
\end{equation}
where $u_1(x)$ is a non-trivial solution of the homogeneous equation satisfying $\hat{B}^{\lambda_1} u_1=0$ and $u_2(x)$ a non-trivial solution satisfying $\hat{B}^{\lambda_2} u_2=0$, where $W(u_1,u_2;x)$ denotes the Wronskian of both solutions evaluated at $x$ and where we have omitted the dependence on $\omega$ for clarity. From this equation, it can be checked that the \textbf{transitivity property} holds:
\begin{equation}\label{eq:trans2}
g(x_1,x_3) = \frac{g(x_1,x_2)g(x_2,x_3)}{g(x_2,x_2)}
\end{equation}
when $x_1 \leq x_2 \leq x_3$ or $x_3 \leq x_2 \leq x_1$. \\

\textbf{The generalization to a tree graph} \\
Consider now the generalization of problem \eqref{eq:boundval} to a tree graph. On each edge $\epsilon \in E$, an operator of the form:
\begin{equation}
\hat{L}^{\epsilon}(x) \defeq \fL_0^{\epsilon}(x)\frac{\partial^2}{\partial x^2} + \fL_1^{\epsilon}(x)\frac{\partial}{\partial x}+ \fL_2^{\epsilon}(x)
\end{equation}
constrains the field $V^{\epsilon}$:
\begin{equation}\label{eq:branch}
\hat{L}^{\epsilon}(x)\fV^{\epsilon}(x) = \fI^{\epsilon}(x),
\end{equation}
where it is understood that $x$ signifies the space coordinate on the edge under consideration. On each leaf $\lambda \in \Lambda$ a boundary condition of the form \eqref{eq:boundcond} holds:
\begin{equation}\label{eq:leaf}
\hat{B}^{\lambda} \fV^{\lambda} = \fI^{\lambda}
\end{equation}
and at each node that is not a leaf $\phi \in \Phi$:
\begin{eqnarray} \label{eq:cont} 
& \fV^{\epsilon} = \fV^{\epsilon'}, \hspace{4mm} \forall \epsilon, \epsilon' \in E(\phi)\\ \label{eq:flow}
& \sum_{\epsilon \in E(\phi) }\fL^{\epsilon} \frac{\partial}{\partial x} \fV^{\epsilon} = \fI^{\phi},
\end{eqnarray}
Equation \eqref{eq:cont} assures continuity of the field and a condition of the form \eqref{eq:flow} is imposed in many physical systems to assure the conservation of flow. 

We wish to determine Green's function $g(x,x_0)$ of this problem. When $x_0$ is located on edge $\epsilon_0$, we need to solve this problem for $I^{\epsilon}(x)=\delta_{\epsilon \epsilon_0}\delta(x-x_0), I^{\lambda}=I^{\phi}=0$. We will first show that at each end of edge $\epsilon_0$ boundary conditions of the form \eqref{eq:boundcond} hold, by using \eqref{eq:leaf}, \eqref{eq:cont} and \eqref{eq:flow}.

To do so, we only need the following recursion rule: Consider a node $\phi$, and suppose that all but one of the edges in $E(\phi)$ satisfy a boundary condition $\hat{B}^{\epsilon} \fV^{\epsilon} = 0$ of the type \eqref{eq:boundcond} at the opposite end of node $\phi$. Within each edge, we chose the spatial coordinate $x^{\epsilon}$ to be $L^{\epsilon}$ (i.e. the edge's length) at that far end and 0 at the node. Let us call the edge that does not satisfy the boundary condition $\epsilon'$. Thus we have:
\begin{equation} \label{eq:unmixed}
\hat{B}^{\epsilon} \fV^{\epsilon} = \fL_1^{\epsilon} \frac{\partial}{\partial x} \fV^{\epsilon}(L^{\epsilon}) + \fL_2^{\epsilon} \fV^{\epsilon}(L^{\epsilon}) = 0, \hspace{4mm} \forall \epsilon \in E(\phi) \setminus \epsilon',
\end{equation}
and from \eqref{eq:cont} and \eqref{eq:flow} if follows that:
\begin{eqnarray} \label{eq:connect1}
& \fV^{\epsilon'}(0) = \fV^{\epsilon}(0), \hspace{4mm} \forall \epsilon \in E(\phi)\\ \label{eq:connect2}
& \sum_{\epsilon \in E(\phi)\setminus \epsilon' }\fL^{\epsilon} \frac{\partial}{\partial x} \fV^{\epsilon}(0) + \fL^{\epsilon'} \frac{\partial}{\partial x} \fV^{\epsilon'}(0) = 0,
\end{eqnarray}
We will show that from conditions \eqref{eq:connect1} and \eqref{eq:connect2} a boundary condition of the form \eqref{eq:boundcond} can be derived for $\fV^{\epsilon'}(0)$ (i.e. the field on edge $\epsilon'$ at the location of node $\phi$), as long as the differential equations on edges $\epsilon \in E(\phi)\setminus \epsilon'$ are homogeneous. Let $u^{\epsilon}(x)$ be a non-trivial solution of the homogeneous problem \eqref{eq:branch} on edge $\epsilon$ that satisfies condition \eqref{eq:unmixed}. Every solution on that edge is necessarily of the form $\fV^{\epsilon}(x) = A^{\epsilon}u^{\epsilon}(x)$. As a consequence of \eqref{eq:connect1}, $A^{\epsilon} = \frac{\fV^{\epsilon'}(0)}{u^{\epsilon}(0)}$, which leads to a constraint on the derivative $\frac{\partial}{\partial x}\fV^{\epsilon}(0)=\frac{\fV^{\epsilon'}(0)}{u^{\epsilon}(0)}\frac{\partial}{\partial x}u^{\epsilon}(0)$. Thus, equation \eqref{eq:connect2} becomes:
\begin{equation}
 \left( \sum_{\epsilon \in E(\phi)\setminus \epsilon' } \fL^{\epsilon} \frac{\frac{\partial}{\partial x} u^{\epsilon}(0)}{u^{\epsilon}(0)} \right) \fV^{\epsilon'}(0) + \fL^{\epsilon'} \frac{\partial}{\partial x} \fV^{\epsilon'}(0) = 0,
\end{equation}
precisely the boundary condition for $\fV^{\epsilon'}$ we were after. 

Applying this operation recursively throughout the tree, starting from the leafs until edge $\epsilon_0$, assures that this edge has a boundary condition of the form \eqref{eq:boundcond} at both ends. Thus, on this edge, $g(x,x_0)$ is of the form \eqref{eq:gf}.

To prove the transitivity property \eqref{eq:trans2} for two arbitrary points $x_1$ and $x_3$ on the tree graph, and for a point $x_2$ that is on the shortest path between $x_1$ and $x_3$, we distinguish four cases. 
\begin{itemize}
\item[1.] {\bf$x_1,x_2, x_3$ are on the same edge.} Since the segment has boundary conditions of the type \eqref{eq:boundcond}, the Green's function may be constructed as in \eqref{eq:gf}, and thus \eqref{eq:trans2} holds.

\item[2.] {\bf$x_1$ and $x_2$ on the same edge, $x_3$ is on a different edge.} Let $\phi$ be the node adjacent to the edge $\epsilon_0$ on which $x_1$ and $x_2$ are located and on the shortest path between $x_2$ and $x_3$. Necessarily, the Green's function at that point satisfies
\begin{equation}
g(\phi,x_1) = \frac{g(\phi,x_2)g(x_2,x_1)}{g(x_2,x_2)},
\end{equation}
which then determines the solution on the adjacent edges $\epsilon \in E(\phi) \setminus \epsilon_0$ (where we choose the spatial coordinate $x^{\epsilon}$ to be $0$ at $\phi$ and $L^{\epsilon}$ at the opposite end). On either of these edges, the solution is of the form $A^{\epsilon}u^{\epsilon}(x^{\epsilon})$, with $u^{\epsilon}(x^{\epsilon})$ a solution satisfying the derived condition of type \eqref{eq:boundcond} at the opposite end. Condition \eqref{eq:cont} then imposes $A^{\epsilon} = \frac{g(\phi,x_1)}{u^{\epsilon}(0)} = \frac{g(\phi,x_2)g(x_2,x_1)}{g(x_2,x_2)u^{\epsilon}(0)}$, and thus:
\begin{equation}
\begin{aligned}
 g(x^{\epsilon},x_1) & = \frac{g(\phi,x_1)}{u^{\epsilon}(0)} u^{\epsilon}(x^{\epsilon}) = \frac{g(\phi,x_2)g(x_2,x_1)}{g(x_2,x_2)u^{\epsilon}(0)}u^{\epsilon}(x^{\epsilon}) \\ 
 & = \frac{\frac{g(\phi,x_2)u^{\epsilon}(x^{\epsilon})}{u^{\epsilon}(0)}g(x_2,x_1)}{g(x_2,x_2)}=\frac{g(x^{\epsilon},x_2)g(x_2,x_1)}{g(x_2,x_2)}.
\end{aligned}
\end{equation}
We may apply this consideration recursively through the tree graph, until we arrive at the point $x_3$, which proves relation \eqref{eq:trans2} in this case.

\item[3.] {\bf$x_2$ and $x_3$ on the same edge, $x_1$ is on a different edge.}
Let $\epsilon$ denote the edge on which $x_2$ and $x_3$ are located, and let us denote the node adjacent to that edge at the side of $x_1$ by $\phi$, and take the $x$-coordinate describing the position in that edge to be zero there ($x^{\epsilon}=0$). Then $\fV^{\epsilon}(0)=g(\phi,x_1)$. On the other side of the edge, at $x^{\epsilon}=L^{\epsilon}$, the derived condition of the form \eqref{eq:boundcond} holds, and thus the field in that edge satisfies $\fV^{\epsilon}(x^{\epsilon}) = \frac{g(\phi,x_1)}{u^{\epsilon}(0)}u^{\epsilon}(x^{\epsilon})(=g(x^{\epsilon},x_1))$, where $u^{\epsilon}(x^{\epsilon})$ is a solution satisfying this boundary condition. The Green's functions $g(x_2,x_2)$ and $g(x_3,x_2)$ are still of the form \eqref{eq:gf}, as in this case the derived boundary conditions are valid on both ends of the edge. Let $v^{\epsilon}(x^{\epsilon})$ be a solution that satisfies the homogeneous boundary condition at $x^{\epsilon}=0$. Then, using \eqref{eq:gf}, the following holds:
\begin{equation}
\begin{aligned}
g(x_3,x_2)\frac{1}{g(x_2,x_2)}&g(x_2,x_1) \\
& = \frac{v^{\epsilon}(x_2)u^{\epsilon}(x_3)}{a_0(x_2)W(u^{\epsilon},v^{\epsilon},x_2)}\frac{a_0(x_2)W(u^{\epsilon},v^{\epsilon},x_2)}{v^{\epsilon}(x_2)u^{\epsilon}(x_2)}\frac{g(\phi,x_1)u^{\epsilon}(x_2)}{u^{\epsilon}(0)}\\
& = \frac{g(\phi,x_1)u^{\epsilon}(x_3)}{u^{\epsilon}(0)} \\
& = g(x_3,x_1)
\end{aligned}
\end{equation}
\item[4.] {\bf$x_1,x_2, x_3$ are on different edges.} This case is proved by combining the two previous cases.
\end{itemize}
\end{proof}

\section{Vector fitting}\label{app:VF}
Here we briefly explain the VF algorithm \citep{Gustavsen1999} as we implemented it. The version of this algorithm we needed approximates a complex function $f(s)$, for which $\abs{f(s)} \rightarrow 0$ when $\abs{s} \rightarrow \infty$ and $\abs{f(s)} > 0$, as follows:
\begin{equation}
f(s) \approx \sum_{l=1}^L \frac{\gamma_l}{\alpha_l + s},
\end{equation}
where the parameter $L$ is chosen. It does so in two steps: First the poles $\alpha_l$ are identified and then the residues $\gamma_l$ are fitted. \\

\textbf{Pole identification} \\
First, a set of chosen starting poles $\bar{\alpha}_l$ is specified, and an unknown auxiliary function $\sigma(s)$ with these poles is proposed, so that:
\begin{align} \label{eq:sigma}
\sigma(s) = \sum_{l=1}^L \frac{\bar{\gamma}_l}{\bar{\alpha}_l + s} + 1 \\ \label{eq:sigmaf}
\left[ \sigma(s)f(s) \right]_{\text{fit}} = \sum_{l=1}^L \frac{\bar{\gamma}_l'}{\bar{\alpha}_l + s}.
\end{align}
Multiplying equation \eqref{eq:sigma} with $f(s)$, and equating this with equation \eqref{eq:sigmaf} gives:
\begin{equation}\label{eq:overdet}
\sum_{l=1}^L \frac{1}{\bar{\alpha}_l + s}\bar{\gamma}_l' - \sum_{l=1}^L \frac{f(s)}{\bar{\alpha}_l + s}\bar{\gamma}_l = f(s).
\end{equation}
When $f$ is evaluated at enough frequency points $s$ (we found it sufficient to sample $f(s)$ on the imaginary axis $s \equiv i\omega_j, \omega_j \in \mathbb{R}, j=1,\hdots,N$ on an equidistant scale for small $\omega$ and on a logarithmic scale for large $\omega$), this gives an over-determined system that can be solved for $\bar{\gamma}_l$ and $\bar{\gamma}_l'$ by the least squares method. The poles of $f(s)$ are then given by the zeros of $\sigma(s)$ since, as the parametrization for $\sigma(s)$ satisfies equation \eqref{eq:overdet}, it holds that (see \citep{Gustavsen1999, Hendrickx2006} for further details):
\begin{equation}
f(s) = \frac{\left[ \sigma(s)f(s) \right]_{\text{fit}}}{\sigma(s)}.
\end{equation}
These zeros can be found as the eigenvalues of the matrix:
\begin{equation}
H = \left(
\begin{array}{cccc}
\bar{\alpha}_1 - \bar{\gamma}_1 &  - \bar{\gamma}_2 & \hdots & - \bar{\gamma}_L \\
- \bar{\gamma}_1 & \bar{\alpha}_2 - \bar{\gamma}_2 & \hdots & - \bar{\gamma}_L \\
\vdots & \vdots & \ddots & \vdots \\
- \bar{\gamma}_1 & - \bar{\gamma}_2 & \hdots & \bar{\alpha}_L - \bar{\gamma}_L
\end{array} \right).
\end{equation}
Note that this procedure can be part of an iterative optimization, where the newly found poles can be used as the starting poles for the next iteration. \\

\textbf{Residue fitting} \\
Once the poles $\alpha_l$ are known, the residues can be determined by solving the over-determined system:
\begin{equation}
\sum_{l=1}^L \frac{1}{\alpha_l + s}\gamma_l = f(s)
\end{equation}
for $\gamma_l$ by the least squares method. \\

\begin{figure}
   \centering
   \includegraphics[width=\textwidth]{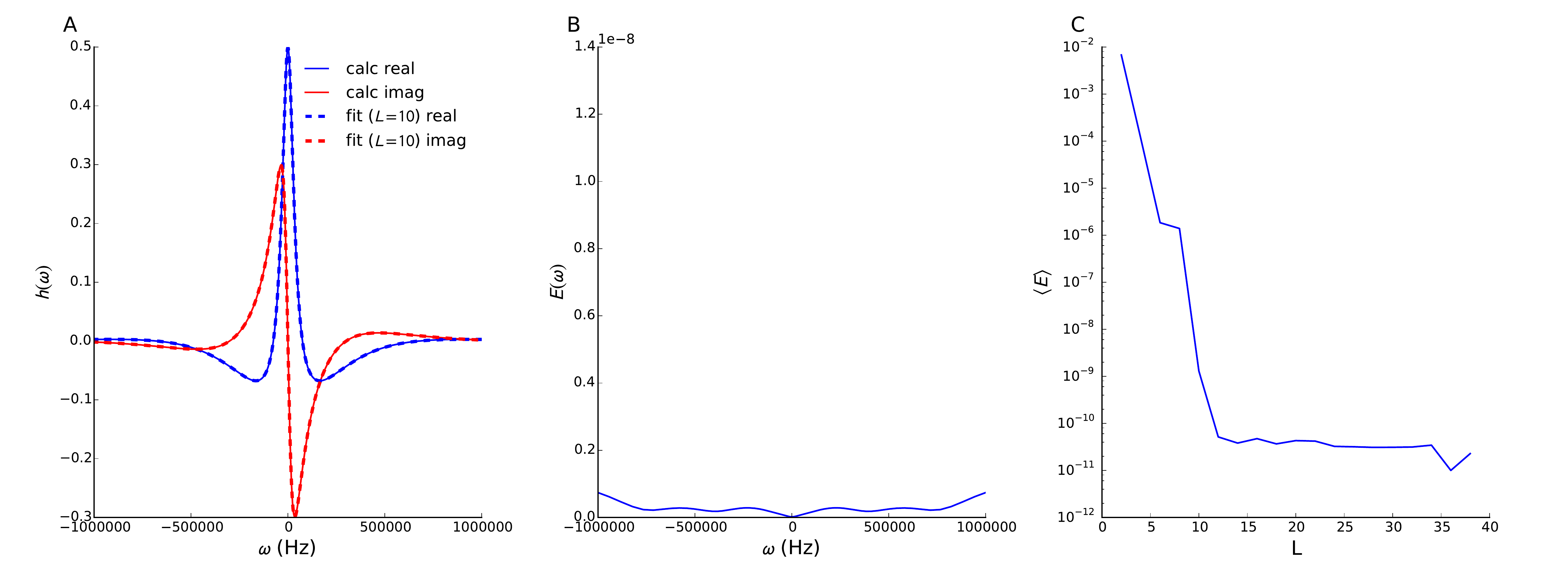}
   \caption{\textbf{Illustration of the VF algorithm.} A: A typical kernel is accurately approximated by the VF algorithm (here $L=10$). B: The error $E(\omega)$ of this fit. C: The average error of the fit as a function of $L$.}
   \label{fig:VF}
\end{figure}

\textbf{Accuracy} \\
The VF algorithm does not provide the number of poles for the fit, nor does it provide an estimate for the accuracy of the fit with a given number of poles. We chose the number of poles as the smallest number for which the approximation gave a sufficient accuracy, defined as:
\begin{equation}
E(\omega_j) \equiv \frac{\abs{f(i\omega_j) -  \sum_{l=1}^L \frac{\gamma_l}{\alpha_l + i\omega_j}}}{\max_j\abs{f(i\omega_j)}} < \epsilon, \hspace{4mm} j=1,\hdots,N.
\end{equation}
For our nerve models, we found that $\epsilon = 10^{-8}$ was sufficient. Usually, this accuracy was reached with $10 \lesssim L \lesssim 20$. In Fig~\ref{fig:VF}A, we show a typical example of a kernel from the SGF formalism, together with its approximation with $L=10$. In Fig~\ref{fig:VF}B we show the error of this approximation as a function of the frequency, whereas Fig~\ref{fig:VF}C we show how the average error, defined as:
\begin{equation}
\langle E \rangle = \frac{1}{N} \sum_{j=1}^{N} E(\omega_j),
\end{equation}
changes with $L$.


\end{document}